%% file: main.tex
\documentclass[journal]{IEEEtran}

\usepackage{amsmath,amssymb,amsfonts}
\usepackage{amsthm}
\usepackage{graphicx}
\usepackage{acronym}
\usepackage{siunitx}
\usepackage{subcaption}
\usepackage{cite}
\usepackage{stfloats}
\usepackage{booktabs}
\usepackage{url}
\usepackage{mathtools}
\usepackage{multirow}
\usepackage[table]{xcolor}

\newtheorem{lemma}{Lemma}
\newtheorem{theorem}{Theorem}
\newtheorem{proposition}{Proposition}

\input{acronyms.tex}

\begin{document}

\title{Analytical Modeling of Open- and Closed-Loop Dispersive Molecular Communication Channels with Pulsatile Flow}

\author{Theofilos Symeonidis,
Fardad Vakilipoor,
Timo Jakumeit,
Robert Schober,\\
Nunzio Tuccitto,
and Maximilian Sch\"afer%
\thanks{This paper was presented in part at the ACM International Conference on Nanoscale Computing and Communication, 2026~\cite{symeonidis2026analyticalmodelingdispersiveclosedloop}, doi: 10.1145/3818305.3830261.}
\thanks{T. Symeonidis, F. Vakilipoor, T. Jakumeit, R. Schober, and M. Sch\"afer are with the Institute for Digital Communications, Friedrich-Alexander-Universit\"at Erlangen-N\"urnberg, Erlangen, Germany.}
\thanks{N. Tuccitto is with the University of Catania, Catania, Italy.}
}

\markboth{IEEE Journal Paper,~2026}%
{Symeonidis \MakeLowercase{\emph{et al.}}}

\maketitle

\begin{abstract}
\Ac{MC} is a communication paradigm in which information is conveyed through
the release, propagation, and reception of molecules. Many envisioned healthcare applications of \ac{MC} are expected to operate
inside the human body, where the \ac{CVS} may serve as the physical
propagation environment and molecular transport is governed by diffusion
and blood flow. Although blood flow
is inherently pulsatile, most analytical \ac{MC} channel models assume steady
flow. In this paper, we develop a time-variant analytical model for dispersive
\ac{MC} channels with pulsatile flow. We derive the straight-duct response as a Normal distribution with
time-variant mean and variance, capturing the combined effects of diffusion
and pulsatile flow, and extend it to closed-loop channels through a
wrapped-Normal representation. The model is
validated against \ac{3D} \acp{PBS} for synthetic and physiologically
motivated velocity waveforms. We further derive a closed-form first-order
approximation for the first-arrival peak time and introduce the
nondimensional indicator \(S_{\mathrm{Rx}}\) for assessing the applicability
of the reduced \ac{1D} model.

Our results show that pulsatility has the strongest influence when molecular
transport is predominantly advective and the temporal flow variations are not
averaged out, whereas stronger diffusion and faster pulsations reduce its
impact on the received signal. Moreover, a \ac{PBS}-based parameter sweep
supports \(S_{\mathrm{Rx}}\ge2\) as a practical criterion for the applicability of the proposed analytical model. Finally, relating the analytical assumptions
to representative blood-vessel classes shows that model applicability depends
not only on the transport conditions but also on physiological properties such as
vessel rigidity, geometric uniformity, and blood rheology.
\end{abstract}

\begin{IEEEkeywords}
Molecular communication, closed-loop channels, pulsatile flow, advection-diffusion, Aris-Taylor dispersion.
\end{IEEEkeywords}

\IEEEpeerreviewmaketitle
\acresetall

\section{Introduction}
\Ac{MC} is an emerging interdisciplinary research field that facilitates information transfer in biological environments. \Ac{MC} has a wide range of potential biomedical and healthcare applications, including health monitoring, early disease detection, and targeted treatment~\cite{6548006,8710366,8567997}.
Most of these applications are expected to operate inside the human body, specifically in the \ac{CVS}, which constitutes the physical propagation channel, where signaling molecules are transported assisted by blood flow~\cite{FELICETTI201627,10433775}. Therefore, the successful design of in-body \ac{MC} systems requires meaningful models that account for all relevant effects arising in this environment.

While the above models provide important insight into flow-based molecular
transport, practical in-body \ac{MC} systems require descriptions that are
both locally accurate within individual vessel segments and compatible with
the continuously circulating nature of blood flow. Locally, molecule transport is governed by the combined effects of diffusion
and flow, which determine the received signal for a given \ac{Tx}-\ac{Rx} pair within a vessel segment. At
longer time scales, circulation through the vascular network may introduce
long-term memory, repeated molecule arrivals, and coupling between local
propagation dynamics and the overall network structure
\cite{brand2025closedloop}.

Recent studies have started to investigate \ac{MC} in closed-loop
systems~\cite{ieee8618321,brand2025closedloop}. These studies demonstrate
that recirculation can lead to channel characteristics that differ
substantially from those of non-recirculating systems, including extended
delay spreads, persistent inter-symbol interference, and topology-dependent
received signals. Experimental studies further confirm the presence of these
effects in closed-loop testbeds~\cite{vakilipoor2025cam,11184881},
while the results in~\cite{vakilipoor2025cam} also show that complex vascular
networks can be well approximated by an equivalent single closed-loop topology,
thereby enabling a simplified representation of recirculating molecular transport. Together, these observations motivate the development
of channel models that retain the local propagation behavior of vessel
segments while being extensible to closed-loop topologies. The equivalent-loop representation, however, is a network-level abstraction
and does not imply uniform local transport conditions throughout the vascular
network. Individual vessel segments may differ substantially in size and flow
pulsatility. Therefore, in addition to the closed-loop representation, we also
consider the straight-duct response as a model for an individual link in a vascular network, where local pulsatile-flow effects can be studied
independently of recirculation.

Another important feature of in-body transport is that blood flow is
generally time-varying. In particular, flow in the \ac{CVS} is inherently
pulsatile due to the cardiac cycle. This physiological feature is well
established in the hemodynamics and fluid mechanics literature
\cite{https://doi.org/10.1002/j.2040-4603.2016.tb00700.x,https://doi.org/10.1002/smll.201904032}
and is further supported by in-vivo measurements and numerical studies that
provide realistic time-variant blood-flow shapes
\cite{https://doi.org/10.1002/fld.1966,Holdsworth_1999}. Nevertheless, most
analytical \ac{MC} channel models for flow-driven or closed-loop transport assume steady flow conditions~\cite{ieee8618321,brand2025closedloop,vakilipoor2025cam,11184881}.

A small number of \ac{MC} studies have adopted pulsatile flow models and
demonstrated their influence on molecule transport and channel responses.
In~\cite{6548006}, an analytical network-based model for drug-particle
propagation in the \ac{CVS} under pulsatile blood flow is developed,
combining a cardiovascular network model with a drug propagation network
model to characterize the received particle concentration at a target site. However, the resulting network-based formulation is highly complex and does
not yield a compact closed-form \ac{CIR}, limiting its tractability for
communication-theoretic analysis. In~\cite{11157771},
nanoparticle propagation with pulsatile flow is investigated using
\acp{PBS}, revealing significant deviations
compared to the steady-flow case. However, the study does not provide an
analytical channel model, neglects diffusive transport, and is tailored to a
specific physiological scenario, limiting its general applicability.
In~\cite{11161338}, analytical \acp{CIR} for non-Newtonian flow in a
plaque-obstructed vessel are derived and complemented by detailed computational fluid dynamics
simulations incorporating pulsatile flow effects. However, the study relies on numerical simulations, while the analytical models are shown to provide only
rough approximations of the actual flow behavior and do not constitute a unified
analytical description of pulsatile \ac{MC} channels.
In~\cite{lee2026correctedinversegaussianfirsthittingtimemodelingmolecular},
an analytical channel model for the first-hitting-time under time-variant drift
is derived. However, the model is restricted to one-dimensional propagation
and focuses on first-hitting-time statistics for absorbing \acp{Rx}. Moreover, pulsatile flow is represented via a
spatially uniform time-variant drift and approximations are used to obtain a
tractable expression, limiting its applicability to realistic flow conditions.
In addition, none of the above works provides an analytical model that
combines pulsatile-flow-induced dispersion with a closed-loop channel topology.

Motivated by these observations, we propose a time-variant analytical
\ac{MC} channel model for pulsatile flow in the dispersive regime~\cite{Jamali2019ChannelModeling}. The model
builds on recent results in~\cite{wang2025pulsatile}, which show that
pulsatile flow described by a Womersley velocity profile~\cite{womersley1955} can be reduced to an
effective \ac{1D} advection-diffusion equation with time-variant velocity and
diffusion coefficient under suitable model assumptions. Based on this description, we derive the \ac{CIR} for a straight \ac{MC} channel with pulsatile flow, and then extend it to closed-loop systems. Unlike existing models, the resulting analytical model can both describe
the received signal within an individual vessel segment and account for the possible
recirculation of molecules in closed-loop \ac{MC} systems. The main contributions of this paper can be summarized as follows:

\begin{itemize}
\item We develop a time-variant dispersive \ac{1D} transport model for molecule propagation with pulsatile flow by combining Womersley flow with
Aris-Taylor dispersion. The resulting model captures both the
time-variant mean displacement and the effective dispersion
induced by the interplay of diffusion and pulsatile flow.

\item We derive the \ac{CIR} for impulsive molecule release in a straight duct
with diffusion and pulsatile flow. The resulting expression is a Normal distribution with
time-variant mean and variance, which are obtained in closed form based on the effective time-variant velocity and diffusion coefficient,
respectively.

\item We extend the straight-duct solution to closed-loop \ac{MC} channels by imposing
periodic boundary conditions, yielding an analytical \ac{CIR} for \ac{MC} systems with diffusion and pulsatile flow that accounts for
molecule recirculation.

\item We investigate the influence of pulsation on the first-arrival peak time, and derive a closed-form first-order approximation for the direct
first-arrival peak time and evaluate its accuracy.

\item We validate the proposed analytical models by comparison to \ac{3D} \acp{PBS} for
synthetic and physiologically motivated velocity waveforms. Moreover, we compare the results of the proposed time-variant models with corresponding steady-flow models to investigate the effect of pulsation on the received signal.

\item We analyze the model applicability over a broad parameter range and
introduce an \ac{Rx}-distance-based nondimensional indicator
\(S_{\mathrm{Rx}}\), which summarizes the conditions under which the proposed model is expected to be accurate. Based on this indicator, we discuss
the applicability of the proposed model to representative blood-vessel classes.

\end{itemize}

This paper substantially extends the preliminary conference version
\cite{symeonidis2026analyticalmodelingdispersiveclosedloop} in several
directions. Building on the straight-duct derivation and closed-loop
expression introduced there, this paper provides more detailed derivations, and
places greater emphasis on the local straight-duct \ac{Rx} response. We further derive a closed-form
first-order approximation for the first-arrival peak time for \ac{MC} systems with pulsatile flow. In addition, we
introduce a regime analysis based on a nondimensional admissibility indicator
and relate the resulting model assumptions to representative blood-vessel
classes.

The remainder of this paper is organized as follows. In
Section~\ref{sec:model}, we introduce the physical channel model, starting
from the underlying \ac{3D} transport formulation, and derive the effective
\ac{1D} dispersive model. Section~\ref{sec:channel}
derives the corresponding \ac{MC} channel model, first for a straight duct
and then for a closed-loop channel, and develops a first-order approximation
for the first-arrival peak time. Section~\ref{sec:results} introduces the
considered velocity waveforms and the \ac{3D} \ac{PBS} setup, validates
the direct \ac{Rx} response for synthetic and physiologically motivated
pulsatile flow scenarios, and evaluates the accuracy of the first-arrival
peak-time approximation. Section~\ref{sec:regime_analysis} analyzes the
parameter regimes in which the proposed model is applicable, introduces the
nondimensional indicator \(S_{\mathrm{Rx}}\), and discusses the implications
of the model assumptions for representative blood-vessel classes. Finally,
Section~\ref{sec:conclusion} concludes the paper.

\section{Physical Channel Model}\label{sec:model}

In this section, we introduce the physical channel model for pulsatile
molecule transport in a cylindrical vessel segment and subsequently extend it
to closed-loop \ac{MC} channels. Motivated by the results in
\cite{vakilipoor2025cam}, which show that complex vascular networks can be
approximated by an equivalent single closed-loop topology, we use this
single-loop representation to account for molecular recirculation. The
resulting system model is illustrated in Fig.~\ref{fig:system_model}.
Starting from the \ac{3D} advection-diffusion equation with space- and  time-variant flow in cylindrical coordinates, as introduced by Aris~\cite{aris1956}, we obtain a tractable description of the axial transport by deriving a reduced \ac{1D} dispersive transport model~\cite{wang2025pulsatile}. We then adopt a Womersley flow model for the pulsatile velocity field and define the corresponding effective diffusion coefficient.

\begin{figure*}[t]
  \centering
  \includegraphics[width=0.9\textwidth]{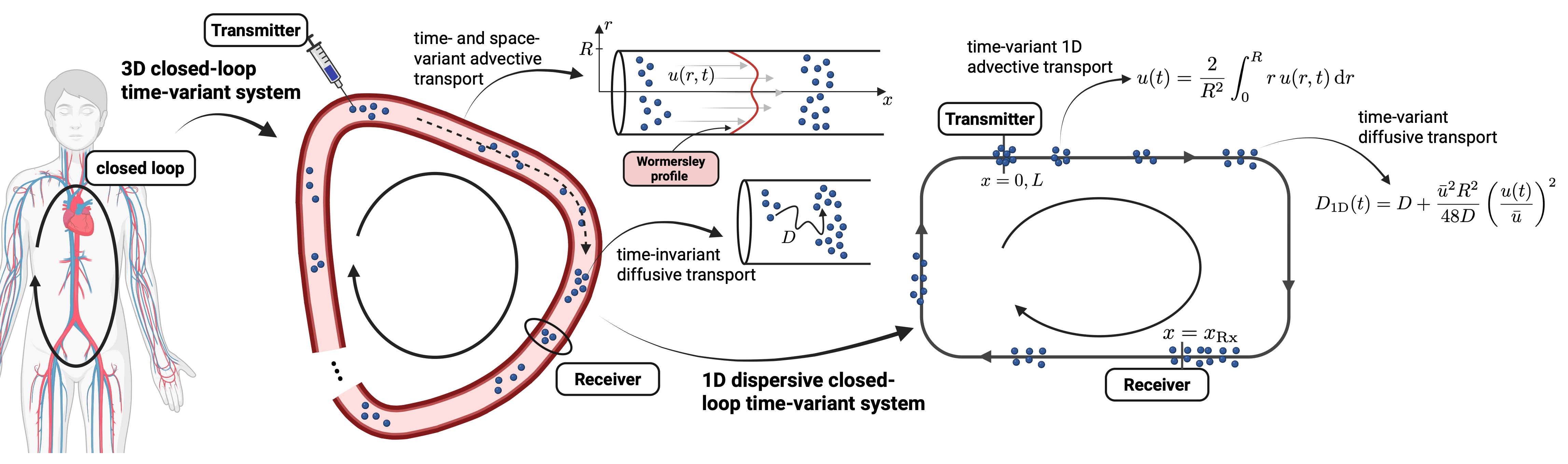}
  \captionsetup{font=normalsize}
  \caption{\textbf{Closed-loop molecular communication system and model reduction.}
  Signaling molecules are released by the \ac{Tx} and propagate through a
  vascular-like closed-loop channel under pulsatile flow (left). The resulting
  transport is governed by a time- and space-variant advection-diffusion process
  with Womersley velocity profile (center). In the dispersive regime, the system
  can be approximated by a \ac{1D} model with time-variant effective velocity
  and diffusion coefficient (right).}
  \label{fig:system_model}
\end{figure*}

\subsection{\ac{3D} Transport Model}
Blood-vessel segments are commonly approximated as cylindrical channels for
analytical modeling of blood flow and solute transport~\cite{womersley1955,wang2025pulsatile}.
Accordingly, in a cylindrical channel, the transport of signaling molecules is
governed by \ac{3D} advection and diffusion. Following the classical formulation by Aris~\cite{aris1956}, the normalized particle concentration $p(x,r,t)$, where $x$ and $r$ denote the axial and radial coordinates, respectively, and $t$ denotes time, can be described by the radially symmetric advection-diffusion equation~\cite{wang2025pulsatile}
\begin{align}
    \begin{split}
\partial_t p(x,r,t)+&u(r,t)\partial_x p(x,r,t)
\!=\\&D\!\left(\partial_{xx}p(x,r,t)\!+\!\frac{1}{r}\partial_r\!\big(r\,\partial_r p(x,r,t)\big)\right),
\label{eq:3D_adv_diff}
    \end{split}
\end{align}
where $\partial_t$, $\partial_x$, and $\partial_r$ denote the first-order
derivatives with respect to time $t$, axial coordinate $x$, and radial
coordinate $r$, respectively, and $\partial_{xx}$ denotes the second-order
derivative with respect to the axial coordinate $x$. Moreover, $u(r,t)$ is the axial \ac{3D} pulsatile flow velocity profile with pulsation period $T$, and $D$ is the molecular diffusion coefficient.

\subsection{Dispersive Regime and \ac{1D} Model Reduction}
\label{sec:assumptions}

Following the Aris-Taylor dispersion theory and its extension to time-variant flows, the reduction of \eqref{eq:3D_adv_diff} to a \ac{1D} equation relies on the following assumptions~\cite{wang2025pulsatile,aris1956}:

\begin{itemize}
  \item The channel is long and slender, i.e.,
\begin{equation}
R \ll L,
\label{eq:slender}
\end{equation}
where $R$ and $L$ denote the channel radius and length, respectively.
  
  \item Radial diffusion is much faster than axial transport, such that the concentration becomes approximately uniform over the cross section, which requires
\begin{equation}
\frac{R^2}{D} \ll \frac{L}{\bar{u}},
\label{eq:radial_fast}
\end{equation}
where $\bar{u}$ is the temporal mean of the cross-sectionally averaged axial flow velocity over one pulsation period $T$, i.e.,
\begin{equation}
\bar{u}
=
\frac{2}{R^{2}T}
\int_{0}^{T}
\int_{0}^{R}
r\,u(r,t)\,\mathrm{d}r\,\mathrm{d}t .
\label{eq:u_bar_def}
\end{equation}

\item Axial dispersion is dominated by shear-induced spreading, which arises from the combined effect of transverse diffusion and the non-uniform flow profile across the channel cross section. In particular, molecular diffusion in the axial direction alone is negligible, i.e.,
\begin{equation}
\frac{L}{\bar{u}} \ll \frac{L^2}{D}.
\label{eq:axial_dispersion_condition}
\end{equation}
  
\item The flow is laminar, fully developed, and radially symmetric, and the carrier fluid is Newtonian.
\end{itemize}

Under these conditions, solute transport enters the dispersive regime, in which the concentration becomes rapidly homogenized over the cross-section, and the longitudinal dynamics can be described by an effective \ac{1D} advection-diffusion equation with time-variant coefficients. Accordingly, the normalized concentration in \eqref{eq:3D_adv_diff} can be averaged over the cross-section as follows
\begin{equation}
p(x,t) = \frac{2}{R^2}\int_0^R p(x,r,t)\, r\,\mathrm{d}r.
\end{equation}
For the case of Womersley velocity profiles, i.e., pulsatile flow profiles in cylindrical channels described by the Womersley solution~\cite{womersley1955}, \eqref{eq:3D_adv_diff} can be transformed into an effective \ac{1D} advection-diffusion equation~\cite[Eq.~(41)]{wang2025pulsatile}
\begin{equation}
\partial_t p(x,t)
+ u(t)\partial_x p(x,t) =
D_{\mathrm{1D}}(t) \partial_{xx} p(x,t).
\label{eq:adv_diff}
\end{equation}
Here, $u(t)$ is the cross-sectionally averaged axial flow velocity obtained from the underlying \ac{3D} velocity profile $u(r,t)$, and $D_{\mathrm{1D}}(t)$ is the effective time-variant diffusion coefficient that accounts for the combined effect of molecular diffusion and shear-induced dispersion. In the following, we characterize the effective time-variant velocity $u(t)$ and diffusion coefficient $D_{\mathrm{1D}}(t)$ in detail.

\subsection{Womersley Velocity Profile}
\label{subsec:womersley_flow}

The Womersley velocity profile is commonly used to model pulsatile flow in cylindrical vessels~\cite{womersley1955,11157771,wang2025pulsatile}. Here, we adopt this model for the \ac{3D} axial velocity profile as given in~\cite[Eq.~(31)--(39)]{wang2025pulsatile}. The axial velocity $u(r,t)$ in \eqref{eq:3D_adv_diff} is modeled as the superposition of a steady Poiseuille component and a pulsatile component composed of harmonic Womersley modes, i.e.,
\begin{equation}
u(r,t)=u_{\mathrm{P}}(r)+\Re\!\left\{u_{\mathrm{W}}(r,t)\right\},
\label{eq:u_3d_decomp}
\end{equation}
with the steady contribution $u_{\mathrm{P}}(r)=2\bar{u}\left(1-\frac{r^2}{R^2}\right)$, where $\bar{u}$ is the cross-sectionally averaged axial velocity~\eqref{eq:u_bar_def}, and $\Re\{\cdot\}$ the real-part operator.

The pulsatile Womersley component \(u_{\mathrm W}(r,t)\) in
\eqref{eq:u_3d_decomp} is defined as
\begin{equation}
u_{\mathrm{W}}(r,t)
=
\sum_{n=1}^{N}
U_w(\omega_n)\,
\Psi_n(r)\,
\exp\!\left(\mathrm{j}(n\omega t+\phi_n)\right),
\label{eq:u_womersley}
\end{equation}
where $U_w(\omega_n)$ denotes the amplitude of the $n$-th harmonic with angular frequency $\omega_n = n\omega$, $\phi_n$ is the corresponding phase shift, $N$ is the number of harmonics, and $\mathrm{j}$ is the imaginary unit. The fundamental angular frequency is given by $\omega=2\pi f$, where $f=1/T$ is the pulsation frequency.
The radial dependence of each harmonic in \eqref{eq:u_womersley} is described by the Womersley shape function
\begin{equation}
\Psi_n(r)
=
\frac{
J_0\!\left(\mathrm{j}^{3/2}a_n\right)
-
J_0\!\left(\mathrm{j}^{3/2}a_n \frac{r}{R}\right)
}{
J_0\!\left(\mathrm{j}^{3/2}a_n\right)
-
\frac{2J_1\!\left(\mathrm{j}^{3/2}a_n\right)}{\mathrm{j}^{3/2}a_n}
},
\label{eq:womersley_shape}
\end{equation}
where $J_0(\cdot)$ and $J_1(\cdot)$ denote the Bessel functions of the first kind of order zero and one, respectively. The corresponding Womersley number of the $n$-th harmonic is given by
\begin{equation}
a_n = R\sqrt{\frac{n\omega}{\nu}},
\label{eq:womersley_number}
\end{equation}
where $\nu$ in \eqref{eq:womersley_number} is the kinematic viscosity of the fluid. The Womersley number in \eqref{eq:womersley_number} quantifies the relative importance of unsteady inertial effects and viscous diffusion, and indicates how rapidly the velocity profile adapts to temporal variations of the flow~\cite{womersley1955}. Small Womersley numbers correspond to quasi-steady flow conditions where viscous effects dominate and the velocity profile remains close to parabolic, whereas large Womersley numbers indicate inertia-dominated flow with a flattened velocity profile and significant phase lag between the pressure gradient and the flow.

From \eqref{eq:u_3d_decomp}, the effective \ac{1D} velocity $u(t)$ in
\eqref{eq:adv_diff} is obtained by cross-sectional averaging as follows
\begin{align}
    \begin{split}
u(t)&=\frac{2}{R^2}\int_0^R r\,u(r,t)\,\mathrm{d}r
\\&=\bar{u}\!\left(
1 + \sum_{n=1}^{N} M_n \cos(n\omega t + \phi_n)
\right),
\label{eq:u_t}
    \end{split}
\end{align}
where the dimensionless coefficients $M_n$ are directly related to the amplitudes of the corresponding \ac{3D} Womersley harmonics, i.e., $M_n=\frac{U_w(\omega_n)}{\bar{u}}$.

\subsection{Time-Variant Diffusion Coefficient}
Similar to the effective diffusion coefficient for steady-flow systems~\cite{Jamali2019ChannelModeling}, the time-variant effective diffusion coefficient in \eqref{eq:adv_diff} is determined by the pulsatile flow velocity described by the Womersley model in Section~\ref{subsec:womersley_flow}. For the parameter ranges considered in this paper, the corresponding Womersley numbers $a_n$ are within the regime for which the diffusion coefficient approximation derived in~\cite[Eq.~(43)]{wang2025pulsatile} can be directly applied. Then, the time-variant effective diffusion coefficient in \eqref{eq:adv_diff} follows as
\begin{equation}
D_{\mathrm{1D}}(t)
=
D + \frac{\bar{u}^2 R^2}{48D}
\left(\frac{u(t)}{\bar{u}}\right)^2.
\label{eq:D1D}
\end{equation}

\section{\ac{MC} Channel Model}\label{sec:channel}
In this section, we derive the \ac{MC} channel model by solving the \ac{1D}
advection-diffusion equation in \eqref{eq:adv_diff}. We first consider a
straight duct without recirculation, representing the local transport behavior
of an individual vessel segment within a larger vascular network, and obtain
the corresponding solution via a sequence of variable transformations. We then
extend this solution to closed-loop \ac{MC} channels.

\subsection{Straight Duct Channel Model}
\label{subsec:str_duct}

\begin{lemma}
\label{lem:straight_duct}
For an impulsive release of particles at $t=0$ and $x=0$, the solution to \eqref{eq:adv_diff} in an infinitely long straight duct can be obtained as follows
\begin{equation}
p_{\mathrm{sd}}(x,t)
=
\frac{1}{\sqrt{2\pi\sigma^2(t)}}
\exp\!\left(
-\frac{[x-\mu(t)]^2}{2\sigma^2(t)}
\right),
\label{eq:gaussian_infinite}
\end{equation}
with the time-variant mean $\mu(t)$ and variance $\sigma^2(t)$ given by
\begin{align}
&\mu(t) = \int_0^t u(s)\,\mathrm{d}s,
&&\sigma^2(t) = 2\int_0^t D_{\mathrm{1D}}(s)\,\mathrm{d}s.
\label{eq:mean_var_def}
\end{align}
Inserting the pulsatile velocity model \eqref{eq:u_t} into \eqref{eq:mean_var_def}, an analytical expression for the mean $\mu(t)$ follows as
\begin{equation}
\mu(t)
=
\bar u\left[
t
+
\sum_{n=1}^{N}
\frac{M_n}{n\omega}
\big(
\sin(n\omega t+\phi_n)-\sin\phi_n
\big)
\right].
\label{eq:mu_closed}
\end{equation}
Moreover, substituting \eqref{eq:u_t} and \eqref{eq:D1D} into \eqref{eq:mean_var_def} yields an analytical expression for the time-variant variance $\sigma^2(t)$, i.e.,
\begin{equation}
\begin{aligned}
\sigma^2(t)
=
2Dt
+
\frac{R^2\bar u^2}{24D}
\Bigg[
& t
+2\sum_{n=1}^{N} M_n S_n(t)
+\sum_{n=1}^{N} M_n^2 Q_n(t) \\
& +2\sum_{1\le m<n\le N} M_m M_n P_{m,n}(t)
\Bigg],
\end{aligned}
\label{eq:sigma2_closed}
\end{equation}
with the coefficients, defined for $n=1,\ldots,N$ and
$1\le m<n\le N$,
\begin{align}
S_n(t)
&=
\frac{\sin(n\omega t+\phi_n)-\sin(\phi_n)}{n\omega},
\\[0.6ex]
Q_n(t)
&=
\frac{t}{2}
+
\frac{\sin(2n\omega t+2\phi_n)-\sin(2\phi_n)}{4n\omega},
\\[0.6ex]
P_{m,n}(t)
&=
\frac{1}{2}\Bigg[
\frac{\sin((n-m)\omega t+(\phi_n-\phi_m))
      -\sin(\phi_n-\phi_m)}{(n-m)\omega}
\nonumber\\
&\hspace{0.2cm}
+
\frac{\sin((n+m)\omega t+(\phi_n+\phi_m))
      -\sin(\phi_n+\phi_m)}{(n+m)\omega}
\Bigg].
\end{align}
\end{lemma}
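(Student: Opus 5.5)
We would prove the lemma by removing the time-variant coefficients of \eqref{eq:adv_diff} through two changes of variables. First, pass to a frame moving with the mean flow: set $\xi = x - \mu(t)$ with $\mu(t)=\int_0^t u(s)\,\mathrm{d}s$, and write $p(x,t)=q(\xi,t)$. By the chain rule, $\partial_t p = \partial_t q - u(t)\partial_\xi q$ and $\partial_x p=\partial_\xi q$. The advection term therefore cancels, and we are left with $\partial_t q = D_{\mathrm{1D}}(t)\,\partial_{\xi\xi} q$.

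Second, rescale time by $\tau(t)=\int_0^t D_{\mathrm{1D}}(s)\,\mathrm{d}s$. This map is strictly increasing because $D_{\mathrm{1D}}(t)\ge D>0$ by \eqref{eq:D1D}, so it is invertible. Writing $q(\xi,t)=w(\xi,\tau(t))$ gives the constant-coefficient heat equation $\partial_\tau w=\partial_{\xi\xi}w$ with initial condition $w(\xi,0)=\delta(\xi)$, since $\mu(0)=\tau(0)=0$. Its fundamental solution on the real line is $w=(4\pi\tau)^{-1/2}\exp(-\xi^2/(4\tau))$.

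Substituting back, with $\sigma^2(t)=2\tau(t)$, yields exactly \eqref{eq:gaussian_infinite} with the mean and variance in \eqref{eq:mean_var_def}. As a consistency check, the result can be verified directly: differentiate the Gaussian using $\dot\mu=u$ and $\dot{\sigma^2}=2D_{\mathrm{1D}}$, and confirm that it satisfies \eqref{eq:adv_diff}, conserves unit mass, and tends to $\delta(x)$ as $t\to0^+$.

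The closed forms then follow by term-by-term integration.
\begin{itemize}
\item For \eqref{eq:mu_closed}, insert \eqref{eq:u_t} and use $\int_0^t\cos(n\omega s+\phi_n)\,\mathrm{d}s=S_n(t)$.
\item For the variance, write $2\int_0^t D_{\mathrm{1D}} = 2Dt + \frac{R^2}{24D}\int_0^t u(s)^2\,\mathrm{d}s$. Expand
\[
\left(\frac{u}{\bar u}\right)^2 = 1 + 2\sum_n M_n c_n + \sum_n M_n^2 c_n^2 + 2\sum_{m<n} M_mM_n c_mc_n,
\]
where $c_n=\cos(n\omega s+\phi_n)$.
\item Integrate each term: $\int c_n = S_n$; $\int c_n^2 = Q_n$ via $\cos^2 = \tfrac12(1+\cos 2\cdot)$; and $\int c_mc_n=P_{m,n}$ via the product-to-sum identity $c_mc_n=\tfrac12[\cos((n-m)\omega s+\phi_n-\phi_m)+\cos((n+m)\omega s+\phi_n+\phi_m)]$. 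The restriction $m<n$ guarantees $(n-m)\omega\neq0$.
\end{itemize}

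Almost all of this is routine bookkeeping. The step needing care is justifying the time rescaling, namely that $\tau$ is a valid monotone reparametrization. Positivity of $D_{\mathrm{1D}}$ settles this. One must also check that the delta initial condition is preserved under both transformations, which holds because both maps fix $t=0$ and shift $x$ by $\mu(0)=0$.
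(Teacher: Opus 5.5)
Your proposal is correct and follows essentially the same route as the paper. Both arguments use the moving-frame substitution $\xi=x-\mu(t)$ to cancel the advection term, then the time rescaling $\tau(t)=\int_0^t D_{\mathrm{1D}}(s)\,\mathrm{d}s$ to reach the constant-coefficient heat equation, giving the fundamental solution with $\sigma^2=2\tau$. Your term-by-term integration for the closed forms of $\mu(t)$ and $\sigma^2(t)$, and your remark that $D_{\mathrm{1D}}\ge D>0$ makes $\tau$ invertible, supply details the paper leaves implicit.
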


\noindent\textit{Proof:}
The proof is provided in Appendix~\ref{app:straight_duct_proof}.

\subsection{Extension to Closed-Loop Channels}

To account for the recirculations in closed-loop \ac{MC} systems (see Fig.~\ref{fig:system_model}), we extend the solution for the infinitely long straight duct in \eqref{eq:gaussian_infinite} in the following.

\begin{theorem}
\label{thm:closed_loop}
For a closed-loop channel of length $L$ with pulsatile flow, the concentration
$p(x,t)$ at position $x\in[0,L)$ can be represented by the following wrapped
normal distribution:
\begin{equation}
p(x,t)
=
\sum_{k=-\infty}^{\infty}
\frac{1}{\sqrt{2\pi\sigma^2(t)}}
\exp\!\left(
-\frac{[x-\mu(t)+kL]^2}{2\sigma^2(t)}
\right),
\label{eq:cir}
\end{equation}
where $k\in\mathbb{Z}$ indexes the periodic images of the concentration
distribution. In particular, the term with $k=0$ corresponds to the direct
contribution without a complete circulation. For the considered positive
mean-flow direction, the terms with $k>0$ account for contributions after
one or more complete forward circulations, whereas the terms with $k<0$
account for periodic-image contributions associated with molecular spreading
in the opposite axial direction across the periodic boundary. Equation
\eqref{eq:cir} thus extends the straight-duct result in
\eqref{eq:gaussian_infinite} to a recirculating channel while preserving the
time-variant mean $\mu(t)$ and variance $\sigma^2(t)$ derived in
\eqref{eq:mean_var_def}.
\end{theorem}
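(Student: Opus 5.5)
We would prove Theorem~\ref{thm:closed_loop} by the method of images. We model the closed loop as the 1D advection-diffusion equation \eqref{eq:adv_diff} posed on the circle $\mathbb{R}/L\mathbb{Z}$. Equivalently, we pose it on the whole real line with the requirement that the solution be $L$-periodic in $x$, subject to the periodic boundary conditions $p(0,t)=p(L,t)$ and $\partial_x p(0,t)=\partial_x p(L,t)$. The initial condition is the periodized impulse $\sum_{k}\delta(x+kL)$, i.e., a single release at $x=0$ on the loop.

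The first step uses only that the coefficients $u(t)$ and $D_{\mathrm{1D}}(t)$ in \eqref{eq:adv_diff} depend on time but not on space. The equation is therefore translation invariant in $x$, so for every $k\in\mathbb{Z}$ the shifted function $p_{\mathrm{sd}}(x+kL,t)$ from Lemma~\ref{lem:straight_duct} again solves \eqref{eq:adv_diff}, now with initial datum $\delta(x+kL)$. By linearity, any finite partial sum of these shifts is a solution with the corresponding sum of deltas as initial data.

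The second step passes to the infinite sum \eqref{eq:cir}, and this is the main obstacle, though a mild one. For $t>0$ we have $\sigma^2(t)\ge 2Dt>0$ by \eqref{eq:D1D}, so each term is a Gaussian in $k$ with tails of order $\exp(-k^2L^2/(4\sigma^2))$. On compact subsets of $x\in\mathbb{R}$ and $t\in[t_0,t_1]$ with $t_0>0$, the series and its termwise derivatives $\partial_t$, $\partial_x$, $\partial_{xx}$ converge uniformly by the Weierstrass M-test, since the derivatives only add polynomial factors in $k$. Hence termwise differentiation is justified, and the sum solves \eqref{eq:adv_diff} for $t>0$. The substitution $k\mapsto k+1$ shows that the sum is $L$-periodic in $x$, so the periodic boundary conditions hold automatically. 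For the initial condition, restricting to $x\in[0,L)$ and taking the limit as $t\to0^+$ leaves only the $k=0$ term, which tends to $\delta(x)$ because $\mu(t),\sigma^2(t)\to0$ (the $k=-1$ image concentrates at $x=L$, which is identified with $0$). Uniqueness of the solution of the periodic problem with positive diffusivity then identifies \eqref{eq:cir} as the concentration. As an optional cross-check, one can expand in Fourier modes $e^{\mathrm{j}2\pi m x/L}$: each mode evolves independently as $\exp(-\mathrm{j}2\pi m\mu(t)/L-2\pi^2m^2\sigma^2(t)/L^2)$, and Poisson summation recovers \eqref{eq:cir}. This also shows that the total mass over one period equals one.

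Finally, we interpret the images, using the sign convention $x-\mu(t)+kL$. The $k=0$ term peaks at $x=\mu(t)\in[0,L)$, giving the direct arrival. For $k>0$ the $k$-th term peaks at $x=\mu(t)-kL$, which lies in $[0,L)$ only after $k$ forward circulations, since $\mu(t)$ grows like $\bar u t$ because $\bar u>0$. The $k<0$ terms contribute the mass that diffuses backward across $x=0$. Throughout, $\mu(t)$ and $\sigma^2(t)$ are exactly those of \eqref{eq:mean_var_def}.
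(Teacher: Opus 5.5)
Your proposal is correct and takes the same approach as the paper, which obtains \eqref{eq:cir} by summing the straight-duct solution over all integer shifts $kL$; your Fourier/Poisson-summation check corresponds to the paper's alternative of solving directly under the periodic boundary conditions \eqref{eq:periodic_bc}, and you also supply the convergence, periodicity, initial-data and uniqueness details that the paper leaves implicit. One small correction: as $t\to0^+$ it is not only the $k=0$ term that survives, since the $k=0$ and $k=-1$ terms split the unit mass between the two ends of $[0,L)$, so the initial condition is better stated as $\int_0^L p(x,t)\varphi(x)\,\mathrm{d}x=\int_{\mathbb{R}}p_{\mathrm{sd}}(x,t)\varphi(x)\,\mathrm{d}x\to\varphi(0)$ for every continuous $L$-periodic test function $\varphi$.
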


\noindent\textit{Proof:}
The proof is provided in Appendix~\ref{app:closed_loop_proof}.

\vspace{0.5em}
\noindent For detection, we consider a transparent \ac{Rx} of length
$\Delta x_{\mathrm{Rx}}$ located at position $x=x_{\mathrm{Rx}}$
(see Fig.~\ref{fig:system_model}). The received signal
$p_{\mathrm{Rx}}(t)$ is defined as the fraction of particles within the
\ac{Rx} region and is obtained by integrating the normalized spatial
concentration over the \ac{Rx} region as follows
\begin{equation}
p_{\mathrm{Rx}}(t)
=
\int_{x_{\mathrm{Rx}}-\Delta x_{\mathrm{Rx}}/2}^{x_{\mathrm{Rx}}+\Delta x_{\mathrm{Rx}}/2}
p(x,t)\,\mathrm{d}x.
\label{eq:pRx}
\end{equation}

\subsection{High-Frequency Limit of Pulsatile Flow}
\label{subsec:high_frequency_limit}

In this subsection, we analyze the asymptotic behavior of the proposed channel model in the high-frequency regime, i.e., for $\omega \to \infty$. This analysis provides insight into the temporal averaging effect observed for rapidly varying flow conditions.
We note that this limit is understood as a formal high-frequency limit of the
reduced one-dimensional model. Since the Womersley number $a_n$, defined in
\eqref{eq:womersley_number}, scales as $a_n\propto\sqrt{n\omega}$,
arbitrarily large values of $\omega$ would eventually lead to a violation of the underlying assumptions.

From \eqref{eq:mu_closed}, the mean displacement consists of the steady
contribution \(\bar{u}t\) and oscillatory contributions that scale with
\(1/(n\omega)\). Since the sinusoidal terms are bounded, these oscillatory
contributions vanish as \(\omega\to\infty\). Therefore,
\begin{equation}
\lim_{\omega \to \infty} \mu(t) = \bar{u} t,
\label{eq:omega_inf}
\end{equation}
which coincides with the mean of the corresponding steady-flow
channel. This result shows that, in the high-frequency regime, the time-variant flow effectively averages out, and the mean particle transport is governed solely by the temporal mean velocity $\bar{u}$.

Next, we consider the variance $\sigma^2(t)$ in \eqref{eq:sigma2_closed}. By analyzing the individual terms, it can be shown that all oscillatory contributions vanish as $\omega\to\infty$, while the quadratic terms in the velocity amplitudes remain. Consequently, we obtain
\begin{equation}
\lim_{\omega \to \infty} \sigma^2(t)
=
2 D_{\mathrm{eff}}^\infty t,
\end{equation}
where the effective diffusion coefficient is given by
\begin{equation}
D_{\mathrm{eff}}^\infty
=
D + \frac{R^2 \bar{u}^2}{48D}
\left(
1 + \frac{1}{2}\sum_{n=1}^{N} M_n^2
\right).
\end{equation}

For comparison, the effective diffusion coefficient of the corresponding steady-flow system is~\cite{Jamali2019ChannelModeling}
\begin{equation}
D_{\mathrm{eff}}^{\mathrm{const}}
=
D + \frac{R^2 \bar{u}^2}{48D}.
\label{eq:Deff_const}
\end{equation}

Thus, the difference between the effective diffusion coefficient of the
pulsatile system in the high-frequency limit and that of the corresponding
steady-flow system is given by
\begin{equation}
\Delta D_{\mathrm{eff}}
=
D_{\mathrm{eff}}^\infty - D_{\mathrm{eff}}^{\mathrm{const}}
=
\frac{R^2 \bar{u}^2}{96D}
\sum_{n=1}^{N} M_n^2,
\label{eq:delta_Deff}
\end{equation}
which is strictly positive and shows that pulsation increases the effective
dispersion relative to steady flow. More generally, this increase is not
restricted to the high-frequency limit, since pulsation also increases the
cycle-averaged effective dispersion for a given mean flow velocity.

This result shows that, within the reduced model, the mean behavior of a
pulsatile system converges to that of a steady-flow system, with larger
effective dispersion. The additional dispersion depends on the squared
amplitudes of the harmonic components. The implications of this asymptotic
behavior for the received signal are investigated in
Section~\ref{sec:results}.

\subsection{First-Arrival Peak-Time Approximation}
\label{subsec:first_peak_analytical_approx}

Pulsatile flow can influence the temporal shape of the \ac{Rx} signal. In
particular, pulsatility can shift the first-arrival peak time, which directly
affects the choice of \ac{Rx} sampling times and is therefore relevant for
\ac{Rx} design. In the following, we derive a first-order approximation for the
first-arrival peak time based on the analytical channel model derived above.

We define the first-arrival peak time as the time at which the finite-region
\ac{Rx} response obtained by inserting the straight-duct solution
\(p_{\mathrm{sd}}(x,t)\) in \eqref{eq:gaussian_infinite} into
\eqref{eq:pRx} reaches its maximum, i.e.,
\begin{equation}
t_{\mathrm{peak}}
=
\arg\max_{t>0}
\int_{x_{\mathrm{Rx}}-\Delta x_{\mathrm{Rx}}/2}^{x_{\mathrm{Rx}}+\Delta x_{\mathrm{Rx}}/2}
p_{\mathrm{sd}}(x,t)\,\mathrm{d}x.
\label{eq:t_peak_definition}
\end{equation}
For the closed-loop channel, this response corresponds to the \(k=0\) contribution of the
wrapped-Normal representation in \eqref{eq:cir}, since setting \(k=0\) in
\eqref{eq:cir} directly recovers the straight-duct solution
\(p_{\mathrm{sd}}(x,t)\) in \eqref{eq:gaussian_infinite}.
Equation~\eqref{eq:t_peak_definition} defines the finite-region peak time of
interest, whose maximization generally has to be performed numerically.
To derive a tractable closed-form approximation of this quantity, we adopt
the uniform concentration assumption~\cite{6649346}, whereby the
concentration within a sufficiently small \ac{Rx} region is approximated by
the concentration at its center. Hence,
\begin{equation}
p_{\mathrm{Rx}}(t)
\approx
\Delta x_{\mathrm{Rx}}\,
p_{\mathrm{sd}}(x_{\mathrm{Rx}},t).
\end{equation}
Since the factor $\Delta x_{\mathrm{Rx}}$ is independent of time, it does not
affect the peak-time maximization. We therefore consider the corresponding
first-arrival concentration
\begin{equation}
p_0(x_{\mathrm{Rx}},t)
\propto
\frac{1}{\sqrt{\sigma^2(t)}}
\exp\left(
-\frac{(x_{\mathrm{Rx}}-\mu(t))^2}{2\sigma^2(t)}
\right),
\label{eq:pointwise_direct_concentration}
\end{equation}
where $\mu(t)$ and $\sigma^2(t)$ are the time-variant mean and variance
defined in~\eqref{eq:mean_var_def}.

\begin{proposition}
\label{prop:first_peak_approx}

Let
\begin{equation}
t_0
=
\frac{
\sqrt{
\left(D_{\mathrm{eff}}^{\mathrm{const}}\right)^2
+
\bar u^2 x_{\mathrm{Rx}}^2
}
-
D_{\mathrm{eff}}^{\mathrm{const}}
}
{\bar u^2}
\label{eq:t0_first_peak}
\end{equation}
denote the peak time of the corresponding steady-flow local-concentration
response, as derived in~\cite[Eq.~(5)]{vakilipoor2025cam}, where
\(D_{\mathrm{eff}}^{\mathrm{const}}\) is given in
\eqref{eq:Deff_const}. This peak time serves as the expansion point for the
pulsatile case. Furthermore, define
\begin{equation}
\kappa=\frac{R^2}{48D},
\qquad
a_0=x_{\mathrm{Rx}}-\bar u t_0,
\qquad
\sigma_0^2=2D_{\mathrm{eff}}^{\mathrm{const}}t_0,
\label{eq:peak_auxiliary_defs}
\end{equation}
and introduce the dimensionless pulsatile component
\begin{equation}
q(t)
=
\sum_{n=1}^{N}
M_n\cos(n\omega t+\phi_n),
\qquad
Q(t)
=
\int_0^t q(\tau)\,\mathrm{d}\tau,
\label{eq:q_Q_defs}
\end{equation}
so that \eqref{eq:u_t} can be written as
\(u(t)=\bar u[1+q(t)]\). Consequently, the mean displacement in
\eqref{eq:mean_var_def} can be expressed as
\begin{equation}
\mu(t)
=
\int_0^t u(\tau)\,\mathrm{d}\tau
=
\bar u[t+Q(t)].
\label{eq:mu_qQ}
\end{equation}
Finally, define
\begin{equation}
q_0=q(t_0),
\qquad
Q_0=Q(t_0).
\label{eq:q0_Q0_defs}
\end{equation}

Under the uniform concentration assumption introduced above, applied over the
\ac{Rx} length, and retaining only first-order terms in the pulsatile
amplitudes \(M_n\), the finite-region \ac{Rx} peak time \(t_{\mathrm{peak}}\) defined
in \eqref{eq:t_peak_definition} can be approximated by
\begin{equation}
t_{\mathrm{peak}}^{(1)}
=
t_0
-
\frac{F_1(t_0)}{F_0'(t_0)},
\label{eq:tpeak_first_order}
\end{equation}
with
\begin{equation}
F_0'(t_0)
=
-2D_{\mathrm{eff}}^{\mathrm{const}}
\sqrt{
\left(D_{\mathrm{eff}}^{\mathrm{const}}\right)^2
+
\bar u^2x_{\mathrm{Rx}}^2
},
\label{eq:F0prime_peak}
\end{equation}
and
\begin{align}
F_1(t_0)
&=
\bar u q_0
\left[
a_0\sigma_0^2
+
2\kappa\bar u
\left(
a_0^2-\sigma_0^2
\right)
\right]
\nonumber\\
&\quad
+
\bar u Q_0
\left[
-\bar u\sigma_0^2
+
\left(
4\kappa\bar u^2
-
2D_{\mathrm{eff}}^{\mathrm{const}}
\right)a_0
\right.
\nonumber\\
&\hspace{4.2cm}\left.
-
4\kappa\bar u D_{\mathrm{eff}}^{\mathrm{const}}
\right].
\label{eq:F1_peak}
\end{align}
\end{proposition}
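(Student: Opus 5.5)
The plan is to characterize $t_{\mathrm{peak}}$ as a root of the stationarity condition of the pointwise concentration \eqref{eq:pointwise_direct_concentration}. I would expand that condition to first order in the amplitudes $M_n$ and then take a single Newton step from the steady-flow root $t_0$.

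\textbf{Stationarity condition.} Setting $\partial_t \ln p_0(x_{\mathrm{Rx}},t)=0$ with $a(t)=x_{\mathrm{Rx}}-\mu(t)$ gives
\begin{equation*}
-\frac{(\sigma^2)'}{2\sigma^2}+\frac{a\mu'}{\sigma^2}+\frac{a^2(\sigma^2)'}{2\sigma^4}=0 .
\end{equation*}
Multiplying by $\sigma^4$ and using $(\sigma^2)'=2D_{\mathrm{1D}}$ from \eqref{eq:mean_var_def}, this is equivalent to $F(t)=0$, where
\begin{equation*}
F(t)=a(t)\,\mu'(t)\,\sigma^2(t)+D_{\mathrm{1D}}(t)\big(a^2(t)-\sigma^2(t)\big).
\end{equation*}
Since $\sigma^2>0$ for $t>0$, this multiplication does not change the roots.

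\textbf{Expansion in $M_n$.} I write $u(t)=\bar u[1+q(t)]$ and $\mu=\bar u[t+Q]$ as in \eqref{eq:mu_qQ}, and use \eqref{eq:D1D} in the form $D_{\mathrm{1D}}=D+\kappa\bar u^2(1+q)^2$. To first order this gives
\begin{equation*}
D_{\mathrm{1D}}\approx D_{\mathrm{eff}}^{\mathrm{const}}+2\kappa\bar u^2 q,\qquad
\sigma^2\approx 2D_{\mathrm{eff}}^{\mathrm{const}}t+4\kappa\bar u^2 Q,
\end{equation*}
together with $a=x_{\mathrm{Rx}}-\bar u t-\bar u Q$ and $\mu'=\bar u(1+q)$. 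Substituting yields $F=F_0+F_1+O(M^2)$, where $F_0$ is the steady-flow expression and $F_1$ is linear in $q$ and $Q$.

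\textbf{Steady-flow root and its derivative.} The zeroth-order function is
\begin{equation*}
F_0(t)=\bar u\,(x_{\mathrm{Rx}}-\bar ut)\,2D_{\mathrm{eff}}^{\mathrm{const}}t+D_{\mathrm{eff}}^{\mathrm{const}}\big[(x_{\mathrm{Rx}}-\bar ut)^2-2D_{\mathrm{eff}}^{\mathrm{const}}t\big].
\end{equation*}
This simplifies to $D_{\mathrm{eff}}^{\mathrm{const}}\big(x_{\mathrm{Rx}}^2-2D_{\mathrm{eff}}^{\mathrm{const}}t-\bar u^2t^2\big)$. Its positive root is exactly $t_0$ in \eqref{eq:t0_first_peak}, which confirms the expansion point. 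Differentiating gives $F_0'(t_0)=-2D_{\mathrm{eff}}^{\mathrm{const}}(D_{\mathrm{eff}}^{\mathrm{const}}+\bar u^2t_0)$. Substituting $t_0$ turns this into \eqref{eq:F0prime_peak}.

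\textbf{First-order term at $t_0$.} I collect the terms linear in $q_0$ and $Q_0$ and evaluate them at $t_0$, using $a_0$ and $\sigma_0^2$ from \eqref{eq:peak_auxiliary_defs}.
\begin{itemize}
\item From the perturbation of $\mu'$, the $q$-terms contribute $\bar u q_0 a_0\sigma_0^2$.
\item From the perturbation of $D_{\mathrm{1D}}$, the $q$-terms contribute $2\kappa\bar u^2q_0(a_0^2-\sigma_0^2)$.
\item From the shift of $a$ by $-\bar uQ$, the $Q$-terms contribute $-\bar u^2Q_0\sigma_0^2-2D_{\mathrm{eff}}^{\mathrm{const}}\bar u Q_0 a_0$.
\item From the shift of $\sigma^2$ by $4\kappa\bar u^2 Q$, the $Q$-terms contribute $4\kappa\bar u^3Q_0a_0-4\kappa\bar u^2D_{\mathrm{eff}}^{\mathrm{const}}Q_0$.
\end{itemize}
Grouping these four contributions reproduces \eqref{eq:F1_peak}.

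\textbf{Perturbation step.} I write $t_{\mathrm{peak}}=t_0+\delta$ with $\delta=O(M)$ and expand $F_0(t_0+\delta)+F_1(t_0+\delta)=0$. Linearizing, and noting that $F_1(t_0+\delta)-F_1(t_0)=O(M^2)$, gives $\delta=-F_1(t_0)/F_0'(t_0)$, which is \eqref{eq:tpeak_first_order}. The uniform concentration assumption is what justifies replacing the finite-region maximization \eqref{eq:t_peak_definition} by this pointwise one.

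\textbf{Main obstacle.} I expect the hardest part to be the bookkeeping in the first-order expansion, where cross terms must be consistently truncated. In particular, $Q(t)$ is $O(M_n/\omega)$ but not small in $t$, so it must still be treated as first order. I would also need to check that $F_0'(t_0)\neq 0$, which holds because it is strictly negative, so the single Newton step is well defined.
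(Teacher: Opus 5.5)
Your proposal is correct and follows essentially the same route as the paper's appendix. You derive the same stationarity function: the paper writes it as $a u s+\tfrac12(a^2-s)s'$, which coincides with yours via $s'=2D_{\mathrm{1D}}$. You then expand $\mu$, $\sigma^2$, and $D_{\mathrm{1D}}$ to first order in the amplitudes $M_n$ and take one Newton step from the steady-flow root $t_0$. Your simplification $F_0(t)=D_{\mathrm{eff}}^{\mathrm{const}}(x_{\mathrm{Rx}}^2-2D_{\mathrm{eff}}^{\mathrm{const}}t-\bar u^2t^2)$ is correct, and your term-by-term bookkeeping of $F_1(t_0)$ reproduces \eqref{eq:F1_peak}.
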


\noindent\textit{Proof:} 
\vspace{0.5em}
The proof of Proposition~\ref{prop:first_peak_approx}
is provided in Appendix~\ref{app:first_peak_derivation}. Approximation~\eqref{eq:tpeak_first_order}
shows that the peak shift depends on two quantities: the instantaneous modulation \(q_0\) and
the accumulated modulation \(Q_0\). The term \(q_0\) captures the influence
of the instantaneous pulsatile velocity at the steady-flow peak time, whereas
\(Q_0\) captures the accumulated displacement induced by the pulsatile
waveform up to this time.

The approximation is most accurate when the pulsatile modulation is small to
moderate, such that terms that are quadratic or higher order in the harmonic
amplitudes \(M_n\) can be neglected. Moreover, it characterizes the first-arrival peak, while later
recirculating peaks in the closed-loop response are captured by the full
wrapped-Normal solution. The accuracy of
\eqref{eq:tpeak_first_order} is evaluated numerically in
Section~\ref{subsec:first_peak_approx_results}.

\section{Simulation Results}\label{sec:results}

In this section, we validate the proposed analytical model by comparison with
\ac{3D} \ac{PBS} and evaluate the accuracy of the proposed first-arrival
peak-time approximation. For validation, we consider the \ac{Rx} response in \eqref{eq:pRx} for both
the straight-duct model in \eqref{eq:gaussian_infinite} and the closed-loop
model in \eqref{eq:cir}. Moreover, we compare the proposed time-variant model
with the corresponding steady-flow model to assess the impact of pulsatility
on the received signal.

First, we introduce the considered flow-velocity waveforms based on
\eqref{eq:u_t} and the \ac{PBS} setup used for validation. We then analyze
the resulting received signals for different system parameters and compare
the proposed time-variant analytical model with both \ac{PBS} results and the
corresponding steady-flow analytical model from~\cite{vakilipoor2025cam}.
Finally, we evaluate the accuracy of the closed-form first-order
approximation for the first-arrival peak time.

\subsection{Velocity Waveforms}
\label{sec:Vel_waveforms}

For validation, we consider both synthetic and physiologically motivated pulsatile flow velocity waveforms in order to assess the model under controlled as well as realistic flow conditions. For all considered waveforms, the \ac{1D} velocity is given by \eqref{eq:u_t}. The corresponding \ac{3D} flow field used in the \ac{PBS} is constructed according to the Womersley model in Section~\ref{subsec:womersley_flow}, ensuring that the \ac{3D} velocity field has the same temporal mean and harmonic content.

\subsubsection{Synthetic Velocity Waveforms}
We consider two periodic synthetic waveforms. First, a sinusoidal waveform, given by
\begin{equation}
u_{\mathrm{sin}}(t)=\bar{u}\left(1+A\sin(2\pi f t)\right),
\label{eq:u_sin_sim}
\end{equation}
is studied, where $A=0.5$ denotes the oscillation amplitude and $f$ is the pulsation frequency. This corresponds to a single-harmonic representation of \eqref{eq:u_t} with $N=1$, $M_1=A$, and $\phi_1=-\pi/2$.

The second synthetic waveform is a pulsed waveform with duty cycle $d=0.2$ and temporal mean $\bar{u}$, defined as
\begin{equation}
u_{\mathrm{pulse}}(t)=
\begin{cases}
\frac{\bar{u}}{d}, & 0 \le \mathrm{mod}(t,T) < dT,\\
0, & dT \le \mathrm{mod}(t,T) < T,
\end{cases}
\label{eq:u_pulse_sim}
\end{equation}
where $T=1/f$ is the pulsation period, and $\mathrm{mod}(t,T)$ denotes the remainder of $t$ divided by $T$. Since this ideal rectangular waveform cannot be directly represented by the finite harmonic expansion in \eqref{eq:u_t}, it is approximated by a truncated Fourier series. In particular, the waveform is expressed in the form of \eqref{eq:u_t} by retaining the first $N$ harmonics of its Fourier series representation, where $N$ is chosen sufficiently large to adequately capture the shape of the rectangular pulse. The corresponding coefficients are obtained from the Fourier series of the rectangular waveform and are given by
\begin{equation}
M_n = \sqrt{A_n^2 + B_n^2}, \qquad
\phi_n = \operatorname{atan2}(-B_n, A_n),
\end{equation}
with
\begin{equation}
A_n = \frac{\sin(2\pi n d)}{\pi n d}, \qquad
B_n = \frac{1-\cos(2\pi n d)}{\pi n d},
\end{equation}
for $n=1,\dots,N$. Hence, the pulsed waveform used in the simulations is a finite-harmonic approximation of an ideal rectangular pulse.
The synthetic sinusoidal \eqref{eq:u_sin_sim} and pulsed \eqref{eq:u_pulse_sim} velocity waveforms are illustrated in Fig.~\ref{fig:waveforms_all}, for \(\bar{u}=\qty{1e-4}{\meter\per\second}\) and \(f=\qty{0.5}{\hertz}\).

\subsubsection{Realistic Velocity Waveform}
In addition to the synthetic waveforms, we consider a physiologically
motivated pulsatile waveform adopted from~\cite{11157771}. We approximate
this waveform by the harmonic representation in \eqref{eq:u_t} with
\(N=12\) harmonics. The fundamental frequency is selected by minimizing the
mean-squared fitting error, while the corresponding harmonic coefficients
are obtained by linear least-squares fitting. The resulting fit, denoted by
\(u_{\mathrm{phys}}(t)\), has fundamental frequency
\(f=\qty{1.15}{\hertz}\), amplitude coefficients
\[
\begin{aligned}
(M_1,\ldots,M_{12}) = (&0.548,\,0.684,\,0.373,\,0.489,\\
&0.352,\,0.166,\,0.253,\,0.135,\\
&0.195,\,0.134,\,0.162,\,0.190),
\end{aligned}
\]
and phase shifts
\[
\begin{aligned}
(\phi_1,\ldots,\phi_{12}) = (&-0.869,\,-1.826,\,-3.009,\,3.137,\\
&1.815,\,1.944,\,1.252,\,0.727,\\
&0.287,\,-0.504,\,-0.605,\,-1.307).
\end{aligned}
\]
The fitted waveform is then normalized with respect to its temporal mean and thus preserves its shape independently of the mean velocity $\bar{u}$, which is selected according to the considered simulation scenario. The resulting fit of the physiologically motivated waveform obtained using \eqref{eq:u_t} is shown in Fig.~\ref{fig:waveforms_all} for \(\bar{u}=\qty{1e-4}{\meter\per\second}\).

Overall, the velocity model in \eqref{eq:u_t} provides a flexible framework that can capture a wide range of velocity waveforms, including both physiologically realistic pulsatile flows in in-body environments and synthetic waveforms used in controlled experimental or microfluidic setups.

\begin{figure}[t]
  \centering
 \begin{subfigure}[t]{0.95\columnwidth}
   \centering
   \includegraphics[width=\linewidth]{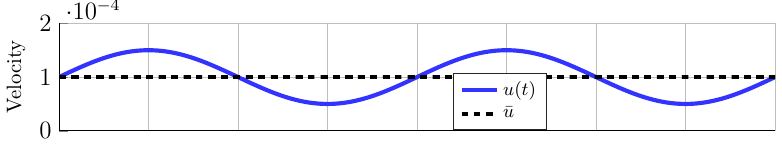}
   \caption{Sinusoidal waveform $u_{\mathrm{sin}}(t)$}
   \label{fig:waveform_sine}
 \end{subfigure}\par\vspace{0.3em}

 \begin{subfigure}[t]{0.95\columnwidth}
   \centering
   \includegraphics[width=\linewidth]{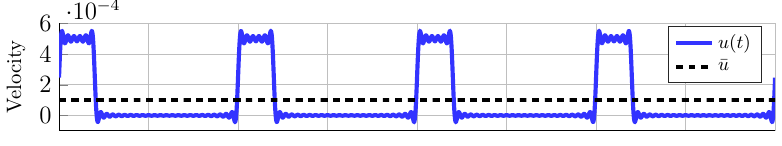}
   \caption{Pulsed waveform $u_{\mathrm{pulse}}(t)$}
   \label{fig:waveform_pulse}
 \end{subfigure}\par\vspace{0.3em}

 \begin{subfigure}[t]{0.95\columnwidth}
   \centering
  \includegraphics[width=\linewidth]{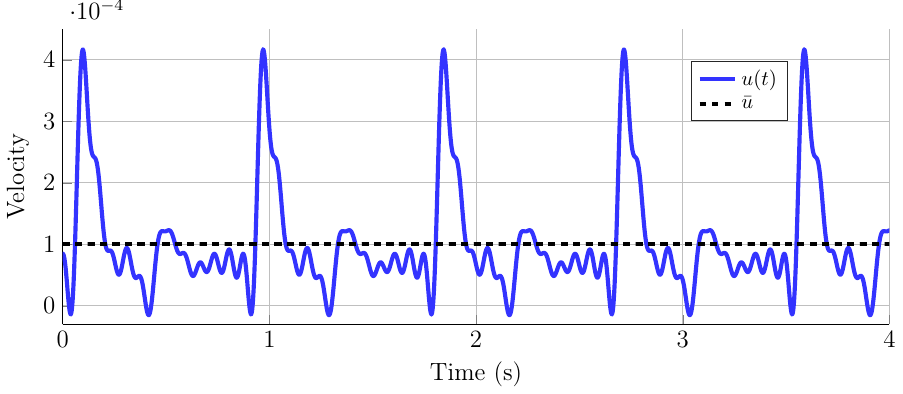}
   \caption{Physiologically motivated waveform $u_{\mathrm{phys}}(t)$}
   \label{fig:waveform_real}
 \end{subfigure}
  \caption{\textbf{Velocity waveforms} used for validating the proposed model in
\eqref{eq:gaussian_infinite}: synthetic sinusoidal (top) and pulsed (center)
waveforms, and a physiologically motivated waveform (bottom), shown for
\(\bar{u}=\qty{1e-4}{\meter\per\second}\).}
  \label{fig:waveforms_all}
\end{figure}

\subsection{Particle-Based Simulation Setup}

We implement the \ac{PBS} in a \ac{3D} cylindrical straight channel of radius
\(R\) and length \(L\). The domain is defined for \(x\in[0,L]\), and
reflecting boundary conditions are imposed at the cylindrical walls.

At time \(t=0\), \(N_{\mathrm{p}}=\num{5e5}\) particles are released
uniformly over a circular cross-section at \(x=0\).
The \ac{PBS} estimate of the finite-region \ac{Rx} response in
\eqref{eq:pRx} is obtained by counting the number of particles within a
\ac{Rx} region of length \(\Delta x_{\mathrm{Rx}}\) centered at position
\(x_{\mathrm{Rx}}\). This region corresponds to a cylindrical slice of volume
\(V_{\mathrm{Rx}}=\pi R^2\Delta x_{\mathrm{Rx}}\). The resulting estimate is
obtained as the fraction of released particles inside this \ac{Rx} region,
i.e.,
\[
p_{\mathrm{Rx}}^{\mathrm{PBS}}(t)
=
\frac{N_{\mathrm{Rx}}(t)}{N_{\mathrm p}},
\]
where \(N_{\mathrm{Rx}}(t)\) is the number of particles inside the \ac{Rx}
volume at time \(t\).

The parameters are chosen according to
\cite{Jamali2019ChannelModeling,vakilipoor2025cam} such that the system
operates in the dispersive regime and the \ac{1D} model assumptions
introduced in Section~\ref{sec:assumptions} are satisfied. Unless stated otherwise, the parameters are fixed to
\(D=\qty{5e-9}{\square\meter\per\second}\),
\(R=\qty{50}{\micro\meter}\),
\(L=\qty{1}{\milli\meter}\),
\(x_{\mathrm{Rx}}=\qty{0.3}{\milli\meter}\), and
\(\Delta x_{\mathrm{Rx}}=\qty{0.1}{\milli\meter}\).

Moreover, we assume blood-like fluid properties with density
\(\rho=\qty{1060}{\kilogram\per\meter\cubed}\) and dynamic viscosity
\(\mu=\qty{3e-3}{\pascal\second}\), corresponding to a kinematic viscosity of
\(\nu=\mu/\rho\approx\qty{2.83e-6}{\square\meter\per\second}\).
For these parameters, the resulting Womersley numbers
\(a_n\), defined in
\eqref{eq:womersley_number}, remain below unity for all considered harmonics,
indicating operation in the low-Womersley-number regime and justifying the
use of the analytical model in \eqref{eq:D1D}.

\subsection{Results for Synthetic Velocity Waveforms}

\begin{figure*}[!t]
  \centering

 \begin{subfigure}[t]{0.48\textwidth}
    \centering
    \includegraphics[width=\linewidth]{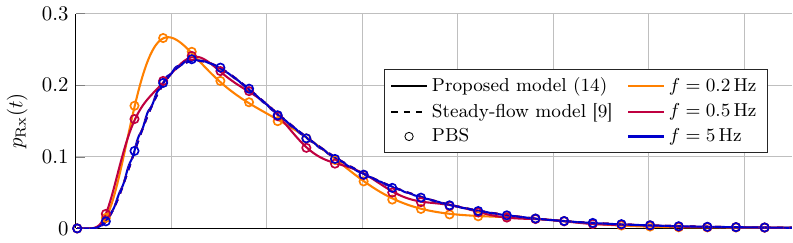}
    \caption{Sinusoidal waveform}
    \label{fig:sine_f}

    \vspace{0.4em}

    \includegraphics[width=\linewidth]{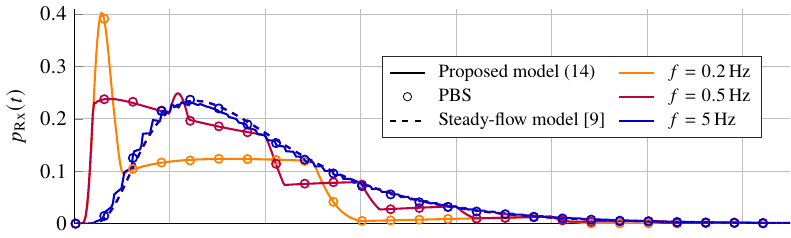}
    \caption{Pulsed waveform}
    \label{fig:pulse_f}

    \caption*{}
\end{subfigure}
\hfill
\begin{subfigure}[t]{0.48\textwidth}
    \centering

    \setcounter{subfigure}{0}

    \includegraphics[width=\linewidth]{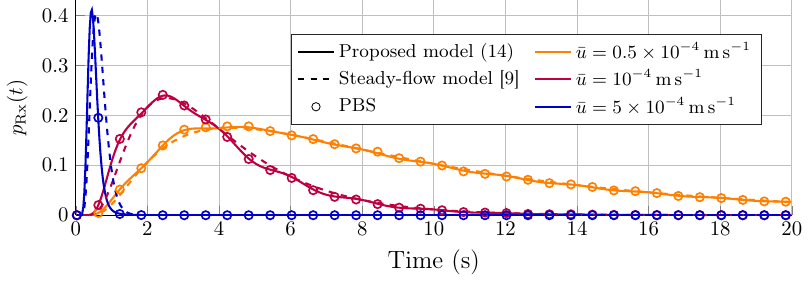}
    \caption{Sinusoidal waveform}
    \label{fig:sine_U}

    \vspace{0.4em}

    \includegraphics[width=\linewidth]{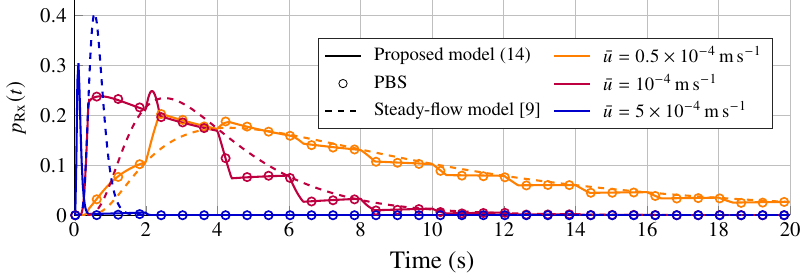}
    \caption{Pulsed waveform}
    \label{fig:pulse_U}

    \caption*{}
\end{subfigure}

  \vspace{-0.5em}

  \begin{minipage}[t]{0.48\textwidth}
    \caption{\small \textbf{Received signal for different pulsation frequencies \(f\)}, for
    sinusoidal (top) and pulsed (bottom) velocity waveforms.}
    \label{fig:synthetic_f}
  \end{minipage}
  \hfill
  \begin{minipage}[t]{0.48\textwidth}
    \caption{\small\textbf{Received signal for different mean velocities \(\bar{u}\)}, for
    sinusoidal (top) and pulsed (bottom) velocity waveforms.}
    \label{fig:synthetic_U}
  \end{minipage}

\end{figure*}

First, we investigate the synthetic velocity waveforms introduced in
Section~\ref{sec:Vel_waveforms} (cf. Fig.~\ref{fig:waveform_sine} and
Fig.~\ref{fig:waveform_pulse}), where either the pulsation frequency \(f\) or
the mean velocity \(\bar{u}\) is varied. Since our focus is on the
influence of pulsation rather than recirculation, we consider a straight-duct
scenario and evaluate the \ac{Rx} response \(p_{\mathrm{Rx}}(t)\) in
\eqref{eq:pRx} using the straight-duct solution in
\eqref{eq:gaussian_infinite}.

Figure~\ref{fig:synthetic_f} shows the received signal for different
pulsation frequencies \(f\) for both synthetic velocity waveforms, for
\(\bar{u}=\qty{1e-4}{\meter\per\second}\) (top: sinusoidal, bottom:
rectangular). Since the steady-flow model in~\cite{vakilipoor2025cam}
assumes a constant flow velocity, it does not depend on the pulsation
frequency. Therefore, it is
identical for all values of \(f\) (dashed curve). First, we observe that the results obtained
from the proposed model (solid curves) are in excellent agreement with the
\ac{PBS} results (circle markers) for all considered frequencies and both waveforms. Furthermore, it can be observed
that, for increasing \(f\), the received signal obtained from \eqref{eq:gaussian_infinite} converges to
that of the corresponding steady-flow system with constant velocity
\(\bar{u}\), confirming the analysis in Section~\ref{subsec:high_frequency_limit}.
This behavior can be
explained by a temporal averaging effect: at higher frequencies, the
pulsation period becomes small compared to the particle propagation time,
such that particles experience multiple velocity oscillations during their
propagation and are effectively affected by the mean flow velocity.

Figure~\ref{fig:synthetic_U} shows the received signal for different values
of the mean velocity \(\bar{u}\) for both synthetic waveforms, for fixed
\(f=\qty{0.5}{\hertz}\). Similar to the results in
Fig.~\ref{fig:synthetic_f}, the results obtained from the proposed model (solid curves) show excellent agreement with the results obtained from \ac{PBS} (circle markers) for all considered \(\bar{u}\) values. Moreover, as the temporal mean
velocity \(\bar{u}\) increases, the received signal peaks occur earlier and
become sharper, which is consistent with faster advective transport. In addition, more pronounced deviations from the corresponding steady-flow
model can be observed, indicating a stronger influence of flow pulsatility. This behavior can be attributed to the fact that, for larger
\(\bar{u}\), particle transport becomes increasingly dominated by advection,
such that particles more closely follow the instantaneous velocity variations
and are therefore more strongly influenced by the pulsations. In contrast,
for decreasing \(\bar{u}\), the received signal becomes increasingly similar
to that of the corresponding steady-flow system. This is because diffusion
becomes again more dominant, effectively smoothing out the
temporal variations of the flow and reducing the impact of pulsations on the
particle trajectories.
Furthermore, the comparison with the corresponding steady-flow response
shows that pulsatility can shift the peaks of the received signal. Such peak
shifts can strongly affect the choice of optimal sampling times at the \ac{Rx}
and further motivate the first-arrival peak-time approximation derived in
Section~\ref{subsec:first_peak_analytical_approx}.

\subsection{Results for Realistic Velocity Waveform}

Next, we consider the physiologically motivated velocity waveform introduced
in Section~\ref{sec:Vel_waveforms} (cf. Fig.~\ref{fig:waveform_real}), and
vary the mean velocity \(\bar{u}\), the \ac{Rx} position \(x_{\mathrm{Rx}}\),
and the diffusion coefficient \(D\). As in the previous experiments, the investigated quantity is the \ac{Rx} response \(p_{\mathrm{Rx}}(t)\) in \eqref{eq:pRx}.

Figure~\ref{fig:pbs} shows the received signal (solid curves) obtained by inserting \eqref{eq:gaussian_infinite}
in \eqref{eq:pRx}, while the circle markers show the corresponding
\ac{PBS} estimate \(p_{\mathrm{Rx}}^{\mathrm{PBS}}(t)\). The dashed curves
show the steady-flow model in~\cite{vakilipoor2025cam}. Results are shown
for different \ac{Rx} positions \(x_{\mathrm{Rx}}\) with fixed mean flow
velocity \(\bar{u}=\qty{2e-4}{\meter\per\second}\) (top plot), for
different temporal mean velocities \(\bar{u}\) with fixed \ac{Rx} position
\(x_{\mathrm{Rx}}=\qty{0.3}{\milli\meter}\) (center plot), and for different
diffusion coefficients \(D\) with fixed \ac{Rx} position
\(x_{\mathrm{Rx}}=\qty{0.3}{\milli\meter}\) and fixed temporal mean velocity
\(\bar{u}=\qty{1e-4}{\meter\per\second}\) (bottom plot).

First, we observe that the proposed model is in excellent agreement with the
\ac{PBS} results for all considered scenarios. Moreover, the steady-flow
model in~\cite{vakilipoor2025cam} approximates the overall response shape,
but it cannot capture the time variations induced by the pulsatile flow. The
influence of pulsatility becomes less pronounced for larger \ac{Rx}
distances, smaller temporal mean velocities, and larger diffusion
coefficients. These observations are consistent with those obtained for the
synthetic velocity waveforms and can be explained by the reduced relative
importance of advective transport compared to diffusion, which leads to
more dispersive dynamics~\cite{Jamali2019ChannelModeling}.

Furthermore, Fig.~\ref{fig:pbs_ubar} shows that pulsatility also leads to
noticeable peak shifts for the physiologically motivated waveform, highlighting
the importance of the proposed time-variant model for accurately characterizing
the received signal in \ac{MC} systems with pulsatile flow.

Finally, to illustrate the closed-loop behavior, Fig.~\ref{fig:closed_loop}
shows the \ac{Rx} response obtained by substituting the closed-loop solution
in \eqref{eq:cir} into \eqref{eq:pRx}. For this closed-loop result only, the
received signal is normalized by its equilibrium value
\[
p_{\mathrm{Rx},\infty}
=
\frac{\Delta x_{\mathrm{Rx}}}{L},
\]
which follows from the uniform equilibrium concentration
\(\lim_{t\to\infty}p(x,t)=1/L\). Hence, the normalized closed-loop response
\(p_{\mathrm{Rx}}(t)/p_{\mathrm{Rx},\infty}\) approaches unity as
\(t\to\infty\). In contrast to the straight-duct \ac{Rx} response considered above, the
closed-loop response exhibits additional delayed peaks corresponding to
molecules recirculating around the loop and returning to the \ac{Rx} region.
Moreover, the influence of pulsatility becomes less pronounced for successive
circulations. This behavior can be attributed to the increasing temporal
spreading of the molecule distribution with propagation time, which smooths
the variations induced by the pulsatile flow velocity. The close agreement
between the analytical closed-loop response and the corresponding \ac{PBS}
results demonstrates that the proposed model can accurately characterize both
the initial and recirculating molecule arrivals with pulsatile flow.

\begin{figure}[!t]
  \centering
 \begin{subfigure}[t]{0.95\columnwidth}
   \centering
   \includegraphics[width=\linewidth]{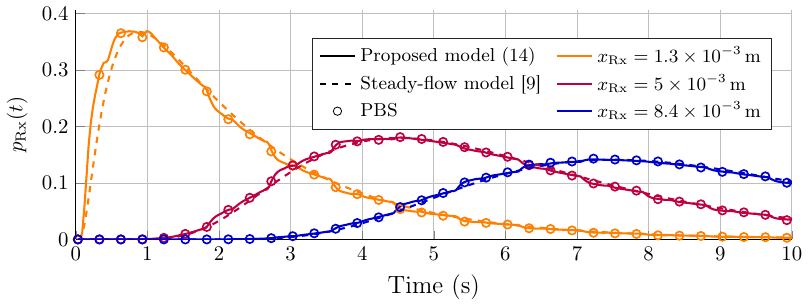}
   \caption{Different \(x_{\mathrm{Rx}}\) for fixed
   \(\bar{u}=\qty{2e-4}{\meter\per\second}\).}
   \label{fig:pbs_xdet}
 \end{subfigure}\par\vspace{0.3em}
 \begin{subfigure}[t]{0.95\columnwidth}
   \centering
   \includegraphics[width=\linewidth]{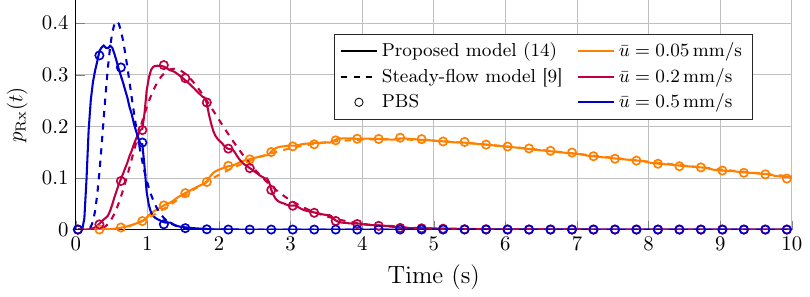}
   \caption{Different \(\bar{u}\) for fixed
   \(x_{\mathrm{Rx}}=\qty{0.3}{\milli\meter}\).}
   \label{fig:pbs_ubar}
 \end{subfigure}\par\vspace{0.3em}
 \begin{subfigure}[t]{0.95\columnwidth}
   \centering
   \includegraphics[width=\linewidth]{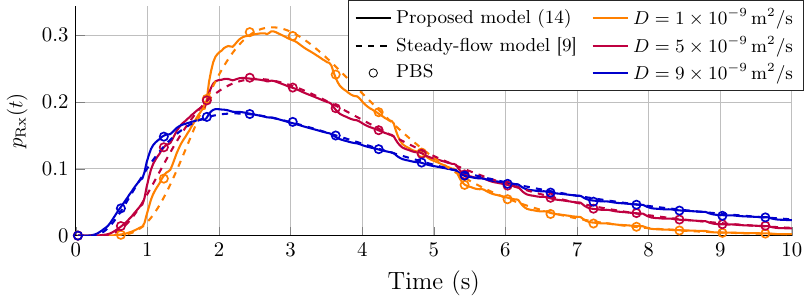}
   \caption{Different \(D\) for fixed
   \(x_{\mathrm{Rx}}=\qty{0.3}{\milli\meter}\) and
   \(\bar{u}=\qty{1e-4}{\meter\per\second}\).}
   \label{fig:pbs_D}
 \end{subfigure}
  \caption{\textbf{Physiologically motivated pulsatile flow:} Received
  signals for different \ac{Rx} positions \(x_{\mathrm{Rx}}\), temporal mean
  velocities \(\bar{u}\), and diffusion coefficients \(D\).}
  \label{fig:pbs}
\end{figure}

\begin{figure}[t]
    \centering
    \includegraphics[width=\columnwidth]{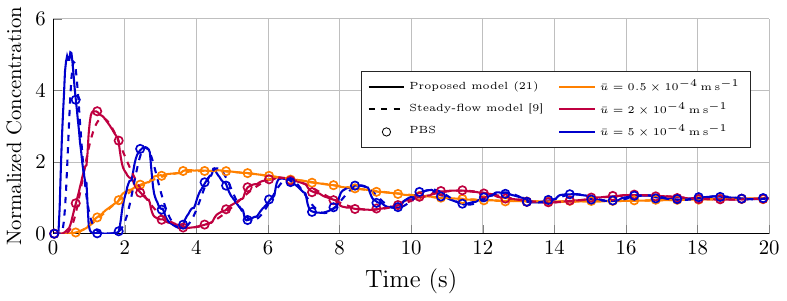}
    \caption{\textbf{Closed-loop physiologically motivated pulsatile flow:}
    Received signals for different temporal mean velocities \(\bar{u}\), comparing
    the proposed analytical model with \ac{PBS}.}
    \label{fig:closed_loop}
\end{figure}

\subsection{Accuracy of the First-Arrival Peak-Time Approximation}
\label{subsec:first_peak_approx_results}

We evaluate the accuracy of the first-order peak-time approximation in
\eqref{eq:tpeak_first_order} by comparing it with the finite-region \ac{Rx}
first-arrival peak time \(t_{\mathrm{peak}}\) defined in
\eqref{eq:t_peak_definition}. For each considered parameter setting,
\(t_{\mathrm{peak}}\) is obtained numerically by maximizing the finite-region
\ac{Rx} response obtained by substituting the straight-duct solution in
\eqref{eq:gaussian_infinite} into \eqref{eq:pRx}, whereas
\(t_{\mathrm{peak}}^{(1)}\) is obtained directly from the closed-form
approximation in \eqref{eq:tpeak_first_order}.

To investigate the agreement under representative parameter variations,
we consider three controlled sweeps within the applicability regime of
the proposed model characterized in
Section~\ref{sec:regime_analysis}. In each sweep, one model parameter is
varied while the remaining parameters are kept fixed.
Figure~\ref{fig:first_peak_approximation} shows the resulting peak times as
functions of (a) the mean flow velocity \(\bar{u}\), (b) the diffusion
coefficient \(D\), and (c) the receiver distance \(x_{\mathrm{Rx}}\).
In each panel, the solid curve represents the reference finite-region \ac{Rx} peak
time \(t_{\mathrm{peak}}\), whereas the dashed curve represents the
corresponding first-order approximation \(t_{\mathrm{peak}}^{(1)}\).

For each controlled sweep, the overall discrepancy between the two peak
times is summarized by the root-mean-square error
\begin{equation}
\mathrm{RMSE}
=
\sqrt{
\frac{1}{K}
\sum_{i=1}^{K}
\left(
t_{\mathrm{peak},i}^{(1)}
-
t_{\mathrm{peak},i}
\right)^2
},
\label{eq:first_peak_rmse}
\end{equation}
where \(K\) denotes the number of values included in the
corresponding sweep.

As shown in Fig.~\ref{fig:first_peak_approximation}, the first-order
approximation closely follows the reference peak time over the considered
parameter variations. In Fig.~\ref{fig:first_peak_approximation}a,
increasing the mean velocity shifts the first-arrival peak to earlier
times, consistent with the shorter advective propagation time, while the
approximate and reference peak times remain close throughout the sweep.
Figure~\ref{fig:first_peak_approximation}b shows the dependence on the
diffusion coefficient. The peak time exhibits a weak dependence on
\(D\) over the considered range, and the first-order approximation closely
reproduces this behavior. Finally,
Fig.~\ref{fig:first_peak_approximation}c shows the effect of increasing
the receiver distance. The peak time increases with
\(x_{\mathrm{Rx}}\), reflecting the longer propagation distance, while the
two curves remain nearly overlapping over the considered range.

The small differences between the two peak times, together with the RMSE values
reported in the individual panels, demonstrate that the
first-order approximation \eqref{eq:tpeak_first_order} provides an
accurate estimate of the actual first-arrival peak time
\(t_{\mathrm{peak}}\). Moreover, the RMSE values are of the same order of
magnitude for all three parameter sweeps, ranging from
\(0.064\,\mathrm{s}\) to \(0.095\,\mathrm{s}\).
Thus, the approximation provides a computationally efficient alternative to
explicitly searching for the first-arrival maximum of the finite-region
\ac{Rx} response in \eqref{eq:pRx}.

\begin{figure*}[!t]
    \centering
    \includegraphics[width=\textwidth]{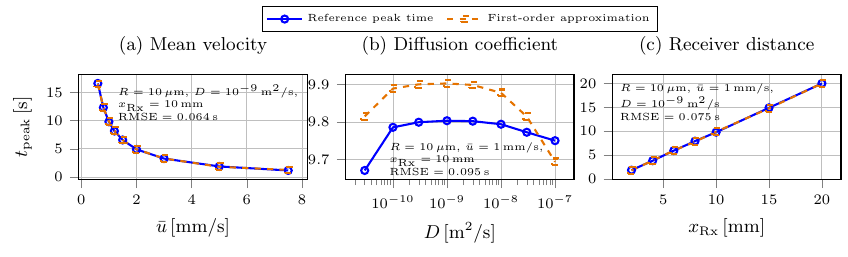}
    \caption{\textbf{First-arrival peak-time approximation.} Comparison between the reference finite-region first-arrival peak time \(t_{\mathrm{peak}}\) and its first-order approximation \(t_{\mathrm{peak}}^{(1)}\) for varying (a) mean velocity \(\bar{u}\), (b) diffusion coefficient \(D\), and (c) receiver distance \(x_{\mathrm{Rx}}\). The RMSE is reported in each panel.}
    \label{fig:first_peak_approximation}
\end{figure*}

\begin{figure}[t]
    \centering
    \includegraphics[width=0.95\columnwidth]{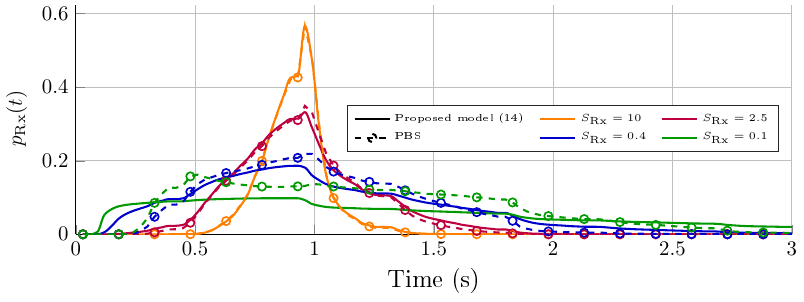}
    \caption{\textbf{PBS-based validation of the admissibility criterion.}
Comparison of the analytical and \ac{PBS} received signals for different
values of \(S_{\mathrm{Rx}}\).}
    \label{fig:straight_first_peak_radius}
\end{figure}

\section{Regime Analysis}
\label{sec:regime_analysis}

In this section, we assess the applicability of the proposed time-variant
analytical model by investigating when the reduced \ac{1D} model in
\eqref{eq:adv_diff} provides an accurate approximation of the \ac{3D}
transport model in \eqref{eq:3D_adv_diff}. First, we use \ac{3D} \ac{PBS}
over a broad parameter range to quantify the accuracy of the reduced model
and introduce the nondimensional admissibility indicator
\(S_{\mathrm{Rx}}\), which summarizes how strongly the reduced-model
assumptions are satisfied for a given \ac{Tx}-\ac{Rx} distance. Based on the
parameter sweep, we identify a practical admissibility criterion and use it
to assess model applicability beyond the parameter combinations directly
validated by \ac{PBS}. Finally, we discuss the applicability of the model to
representative blood-vessel classes, whose compatibility with the physical
model assumptions is summarized qualitatively in a table.

\subsection{PBS-Based Model Applicability Analysis}
\label{subsec:numerical_regime_validation}

To support the applicability analysis, we performed an extensive
\ac{PBS}-based parameter sweep over vessel radii, mean flow velocities,
diffusion coefficients, and \ac{Tx}-\ac{Rx} distances. The proposed
analytical model was compared with the corresponding \ac{3D} \ac{PBS}
response, and the discrepancy was quantified by the RMSE. Since the results of the parameter-sweep analysis for all considered
parameter settings cannot be shown in the paper, additional received-signal
figures from the parameter sweep are provided in an accompanying
repository~\cite{Symeonidis2026RegimeAnalysis}. In the following, we use these results to motivate a compact admissibility indicator that summarizes how strongly the reduced-model assumptions are satisfied over the \ac{Tx}-\ac{Rx} distance, and we show
representative cases that illustrate its interpretation.

To assess whether the reduced \ac{1D} model is applicable at the \ac{Rx}, we
evaluate the assumptions in Section~\ref{sec:assumptions} given the propagation
distance from the \ac{Tx} to the \ac{Rx}. Accordingly, the characteristic
axial length scale entering these conditions is taken as
\(x_{\mathrm{Rx}}\). The conditions in
\eqref{eq:slender}--\eqref{eq:axial_dispersion_condition} can then be
rewritten as
\begin{equation}
\frac{x_{\mathrm{Rx}}}{R} \gg 1,\qquad
\frac{D x_{\mathrm{Rx}}}{\bar{u}R^2} \gg 1,\qquad
\frac{\bar{u}x_{\mathrm{Rx}}}{D} \gg 1.
\label{eq:nondimensional_conditions}
\end{equation}

These measures correspond, respectively, to the slenderness of the
propagation domain, the separation between radial diffusion and axial
advection time scales, and the dominance of axial advection over axial
molecular diffusion.

Since the proposed \ac{1D} model requires all three conditions to be satisfied, no single measure is sufficient to characterize the
applicability regime. We therefore define the conservative nondimensional
admissibility indicator
\begin{equation}
S_{\mathrm{Rx}}
=
\min\left(
\frac{x_{\mathrm{Rx}}}{R},
\frac{D x_{\mathrm{Rx}}}{\bar u R^2},
\frac{\bar u x_{\mathrm{Rx}}}{D}
\right).
\label{eq:S_Rx_regime}
\end{equation}
The minimum is used because the least satisfied condition limits the validity
of the reduced \ac{1D} approximation. Thus, larger values of
\(S_{\mathrm{Rx}}\) indicate that the combined reduced-model requirements are
more strongly satisfied for a given \ac{Tx}-\ac{Rx} arrangement.

The accompanying dataset~\cite{Symeonidis2026RegimeAnalysis} was used to identify the operating
regime in which the reduced \ac{1D} model remains accurate. The resulting RMSE values were then related to the corresponding values of
\(S_{\mathrm{Rx}}\). This allowed us to quantify how
strongly the reduced-model assumptions have to be satisfied in practice. The
sweep results support \(S_{\mathrm{Rx}}\ge2\) as a practical admissibility
criterion for parameter combinations for which the proposed analytical model
is expected to accurately reproduce the dynamics of the received signal. 

Figure~\ref{fig:straight_first_peak_radius} compares the proposed analytical
model (solid curves) with \ac{PBS} (dashed curves with markers) for different values of \(S_{\mathrm{Rx}}\). For this
comparison, the \ac{Rx} distance, mean velocity, and diffusion coefficient are
fixed to \(x_{\mathrm{Rx}}=\qty{0.50}{\milli\meter}\),
\(\bar u=\qty{0.5}{\milli\meter\per\second}\), and
\(D=\qty{1e-9}{\square\meter\per\second}\), respectively, while the vessel
radius is varied, i.e., \(R=\qty{10}{\micro\meter}\),
\(\qty{20}{\micro\meter}\), \(\qty{50}{\micro\meter}\), and
\(\qty{100}{\micro\meter}\). The corresponding values of the admissibility
indicator in \eqref{eq:S_Rx_regime} are \(S_{\mathrm{Rx}}=10\), \(2.5\),
\(0.4\), and \(0.1\), respectively. The cases \(R=\qty{10}{\micro\meter}\) and
\(R=\qty{20}{\micro\meter}\) satisfy the criterion
\(S_{\mathrm{Rx}}\ge2\), and the proposed analytical model is in excellent
agreement with the \ac{PBS} results. As the vessel radius increases further, however,
the admissibility criterion \(S_{\mathrm{Rx}}\ge2\) is no longer satisfied and the agreement
between the analytical and \ac{PBS} responses deteriorates. This behavior is mainly caused by the second nondimensional measure in
\eqref{eq:nondimensional_conditions}, which follows from the radial-diffusion
condition in \eqref{eq:radial_fast} and decreases with \(1/R^2\).

Overall, the validation supports the use of \(S_{\mathrm{Rx}}\ge2\) as a
practical admissibility criterion for identifying parameter combinations for which
the reduced \ac{1D} model in \eqref{eq:adv_diff} is expected to provide an
accurate approximation of the \ac{3D} model in \eqref{eq:3D_adv_diff}.
The \ac{PBS} validation above was performed for comparably short
\ac{Tx}-\ac{Rx} distances, for which direct comparison with \ac{3D} \ac{PBS}
is computationally feasible. At larger propagation distances, \ac{PBS}
becomes increasingly demanding because particles have to be tracked over
longer times. In contrast, \(S_{\mathrm{Rx}}\) can be evaluated directly for
such scenarios. Since all three nondimensional measures in
\eqref{eq:nondimensional_conditions} increase with \(x_{\mathrm{Rx}}\) for
fixed vessel radius \(R\), mean velocity \(\bar u\), and diffusion coefficient
\(D\), the reduced-model assumptions become more strongly satisfied as the
\ac{Tx}-\ac{Rx} distance increases.

\renewcommand{\arraystretch}{1.2}
\begin{table*}
  \centering
 \caption{Representative blood-vessel classes and qualitative compatibility with the model assumptions.}\label{tab:Vessel_Types}
  \resizebox{\textwidth}{!}{%
  \begin{tabular}{|l||c|c|c|c|c|c||c|}
    \hline
    \rowcolor{gray!40}\textbf{Blood vessel} & \textbf{Radius} $R$ & \textbf{Time-averaged velocity $\bar{u}$} & \textbf{Rigid walls?} & \textbf{Tapering?} & \textbf{Newtonian fluid?} & \textbf{Flow pulsatility} & \textbf{Reference} \\ \hline\hline

    \multicolumn{8}{|c|}{\cellcolor{gray!20}\textbf{Large conduit arteries}} \\ \hline\hline
    Aorta & \(\sim\qtyrange{10.5}{18.5}{\milli\meter}\) & \(\sim\qtyrange{30}{100}{\centi\meter\per\second}\) & \cellcolor{red!10} No & \cellcolor{red!10} Yes & \cellcolor{green!10} Yes & Strong & \cite{Pedley1980, Aaronson2012, Wolak2008, Sidebotham2007} \\ \hline
    Carotid, iliac & $\sim[2.8,3.5]\,\SI{}{\milli\meter}$ & $\sim[30,45]\,\SI{}{\centi\meter\per\second}$ & \cellcolor{red!10} No & \cellcolor{yellow!20} Marginal & \cellcolor{green!10} Yes & Strong & \cite{Ferrara1995, Pomella2017}  \\ \hline
    Femoral, renal & $\sim[2.4,4.9]\,\SI{}{\milli\meter}$ & $\sim[12,81]\,\SI{}{\centi\meter\per\second}$ & \cellcolor{yellow!20} Marginal & \cellcolor{yellow!20} Marginal & \cellcolor{green!10} Yes & Strong & \cite{Greene1981, Sandgren1999, Lewis1986}  \\ \hline\hline

    \multicolumn{8}{|c|}{\cellcolor{gray!20}\textbf{Medium-sized arteries}} \\ \hline\hline
    Small named arteries & $\sim[0.5,3]\,\SI{}{\milli\meter}$ & $\sim[10,45]\,\SI{}{\centi\meter\per\second}$ & \cellcolor{yellow!20} Marginal & \cellcolor{yellow!20} Marginal & \cellcolor{green!10} Yes & Moderate & \cite{Aaronson2012}  \\ \hline\hline

    \multicolumn{8}{|c|}{\cellcolor{gray!20}\textbf{Arterioles}} \\ \hline\hline
    Large arterioles & \(\sim\qtyrange{15}{50}{\micro\meter}\) & $\sim[1,10]\,\SI{}{\centi\meter\per\second}$ & \cellcolor{green!10} Yes & \cellcolor{green!10} No & \cellcolor{yellow!20} Marginal & Moderate & \cite{Pedley1980, Aaronson2012}  \\ \hline
    Terminal arterioles & $\sim[3,15]\,\SI{}{\micro\meter}$ & \(\sim\qtyrange{1.2}{5}{\milli\meter\per\second}\) & \cellcolor{green!10} Yes & \cellcolor{green!10} No & \cellcolor{red!10} No & Weak & \cite{Pedley1980, Aaronson2012, Koutsiaris2016, Dremin2025}  \\ \hline \hline

    \multicolumn{8}{|c|}{\cellcolor{gray!20}\textbf{Capillaries}} \\ \hline\hline
    Capillaries & $\sim[2,2.5]\,\SI{}{\micro\meter}$ & $\sim[0.6,1.2]\,\SI{}{\milli\meter\per\second}$ & \cellcolor{green!10} Yes & \cellcolor{green!10} No & \cellcolor{red!10} No & Weak & \cite{Pedley1980, Aaronson2012, Dremin2025, Koutsiaris2016}  \\ \hline \hline

    \multicolumn{8}{|c|}{\cellcolor{gray!20}\textbf{Venules and veins}} \\ \hline\hline
    Venules & $\sim[3.5,200]\,\SI{}{\micro\meter}$ & $\sim[0.1,2.5]\,\SI{}{\centi\meter\per\second}$ & \cellcolor{green!10} Yes & \cellcolor{green!10} No & \cellcolor{yellow!20} Marginal & Weak & \cite{Aaronson2012, Gavins2016, Ricci2011} \\ \hline
    Small veins & $\sim[0.2,2]\,\SI{}{\milli\meter}$ & $\sim[2.5,7.5]\,\SI{}{\centi\meter\per\second}$ & \cellcolor{green!10} Yes & \cellcolor{green!10} No & \cellcolor{green!10} Yes & Weak & \cite{Aaronson2012}  \\ \hline
    Medium-sized veins & $\sim[2,5]\,\SI{}{\milli\meter}$ & $\sim[4,17]\,\SI{}{\centi\meter\per\second}$ & \cellcolor{yellow!20} Marginal & \cellcolor{green!10} No & \cellcolor{green!10} Yes & Moderate & \cite{Aaronson2012, Valdueza1996}  \\ \hline
    Large veins, vena cava & $\sim[6.5,15]\,\SI{}{\milli\meter}$ & $\sim[15,40]\,\SI{}{\centi\meter\per\second}$ & \cellcolor{yellow!20} Marginal & \cellcolor{yellow!20} Marginal & \cellcolor{green!10} Yes & Strong & \cite{Aaronson2012, Prince1983}  \\ \hline \hline

    \multicolumn{8}{|c|}{\cellcolor{gray!20}\textbf{Pulmonary vessels}} \\ \hline\hline
    Pulmonary arteries & $\sim[13.25,23.3]\,\SI{}{\milli\meter}$ & $\sim[10,100]\,\SI{}{\centi\meter\per\second}$ & \cellcolor{red!10} No & \cellcolor{red!10} Yes & \cellcolor{green!10} Yes & Strong & \cite{Berger2022, Gardin1988}  \\ \hline
    Pulmonary arterioles & $\sim[10,50]\,\SI{}{\micro\meter}$ & \(\sim\qtyrange{1.2}{5}{\milli\meter\per\second}\) & \cellcolor{green!10} Yes & \cellcolor{green!10} No & \cellcolor{yellow!20} Marginal & Moderate & \cite{Townsley2012, Koutsiaris2016} \\ \hline
    Pulmonary venules & $\sim[10,50]\,\SI{}{\micro\meter}$ & \(\sim\qtyrange{1.2}{5}{\milli\meter\per\second}\) & \cellcolor{green!10} Yes & \cellcolor{green!10} No & \cellcolor{yellow!20} Marginal & Weak & \cite{Townsley2012, Koutsiaris2016} \\ \hline
    Pulmonary capillaries & $\sim[3.5,5]\,\SI{}{\micro\meter}$ & $\sim[0.5,1]\,\SI{}{\milli\meter\per\second}$ & \cellcolor{green!10} Yes & \cellcolor{green!10} No & \cellcolor{red!10} No & Weak & \cite{Chambers2019} \\ \hline
  \end{tabular}
  }\vspace*{1mm}

  {\setlength{\fboxsep}{0pt}%
  \fcolorbox{black}{green!10}{\makebox[2ex][c]{\rule{0pt}{2ex}}}} Supports model assumptions\hspace*{1cm}
  {\setlength{\fboxsep}{0pt}%
  \fcolorbox{black}{yellow!20}{\makebox[2ex][c]{\rule{0pt}{2ex}}}} Marginal\hspace*{1cm}
  {\setlength{\fboxsep}{0pt}%
  \fcolorbox{black}{red!10}{\makebox[2ex][c]{\rule{0pt}{2ex}}}} Violates model assumptions
\end{table*}
\renewcommand{\arraystretch}{1.0}

\subsection{Applicability to Representative Blood-Vessel Classes}
\label{subsec:vessel_class_applicability}

The indicator \(S_{\mathrm{Rx}}\) summarizes the nondimensional transport
conditions associated with the reduced \ac{1D} model. However, physiological
blood vessels may also deviate from the idealized physical assumptions used
in the derivation. In particular, the proposed model assumes a local straight
cylindrical vessel segment with approximately constant radius, rigid walls,
Newtonian fluid behavior, and laminar fully developed flow. Therefore, in
addition to the admissibility criterion, the physical properties of
different vessel classes have to be considered. This assessment concerns the local vessel-segment transport model and,
in particular, the direct propagation from the \ac{Tx} to the \ac{Rx}
before recirculation. Accordingly, Table~\ref{tab:Vessel_Types} evaluates
the compatibility of individual vessel classes with the local transport
assumptions, rather than the topology of the vascular network as a whole.
This distinction is important because vessel size and flow pulsatility may
vary substantially between individual links of the network.

Table~\ref{tab:Vessel_Types} provides a qualitative assessment of
representative blood-vessel classes with respect to these assumptions. The
table includes vessel size, typical time-averaged velocity, wall rigidity,
tapering, Newtonian fluid behavior, and the expected flow pulsatility. The tapering column is included because the analytical model assumes an
approximately constant vessel radius over the considered local segment.
Therefore, pronounced tapering represents a deviation from the assumed
cylindrical geometry. Mild tapering, however, does not necessarily preclude
the use of the model, since a sufficiently short vessel segment may still be
approximated as locally cylindrical with an approximately constant radius.
Therefore, tapering is treated as a secondary applicability consideration
rather than as a standalone exclusion criterion. Other deviations from the
model assumptions, such as pronounced vessel-wall deformation in response to pressure variations, non-Newtonian blood
rheology, and very small vessel dimensions, may impose more fundamental
limitations on the proposed transport model.

The flow pulsatility classifications in Table~\ref{tab:Vessel_Types} are
qualitative and indicate the expected importance of temporal flow variations
for particle transport. Strong pulsatility denotes cases in which the
time variation of the flow should be explicitly accounted for, whereas
moderate pulsatility indicates that a steady-flow approximation may be
reasonable but can introduce some loss of accuracy. Weak pulsatility denotes
cases in which temporal flow variations are sufficiently small, such that particle
transport can generally be modeled using a steady-flow approximation.
These classifications describe the importance of pulsatile effects and
should not be interpreted as a criterion for the applicability of the
reduced \ac{1D} transport model itself.

The classifications in Table~\ref{tab:Vessel_Types} are qualitative and are
based on the physiological characteristics reported in the cited literature.
The considered physical properties, such as wall rigidity, tapering, and
Newtonian fluid behavior, vary continuously rather than forming strictly
binary categories. Accordingly, ``marginal'' denotes the transitional range
between clear support for and clear violation of a model assumption. In such
cases, the corresponding property is only partially compatible with the
idealized assumption and may depend on the specific vessel segment or
physiological conditions. The marginal classification is therefore not a
quantitative threshold, but rather a qualitative guide for identifying cases
in which the corresponding model assumption may be conditionally reasonable.

Several vessel classes in Table~\ref{tab:Vessel_Types} therefore provide
promising application regimes for the proposed analytical model. Large
arterioles and pulmonary arterioles are particularly relevant examples. They
support the assumptions of approximately rigid walls and negligible tapering,
while their flow pulsatility is classified as moderate. However, the Newtonian-fluid approximation is marked as marginal in
these vessel classes, which suggests that the proposed model should be
interpreted as an effective transport description.
Nevertheless, when the admissibility criterion
\(S_{\mathrm{Rx}}\ge2\) is satisfied, these vessel classes are suitable
candidates for modeling residual pulsatile-flow effects over local
\ac{Tx}-\ac{Rx} distances in small vessels.

Small veins, venules, and pulmonary venules also satisfy several of the main
physical assumptions of the model. In particular, they are compatible with
the assumptions of approximately rigid walls and negligible tapering, and
small veins additionally support the Newtonian-fluid assumption. Their flow pulsatility is classified as weak, but this does not invalidate the model.
Instead, it implies that the time variation of the flow has a weaker impact
on the \ac{Rx} response. These vessel classes are
therefore compatible with the proposed model, although pronounced
pulsatile-flow effects are not expected.

Medium-sized veins and large veins, including the vena cava, also form
compatible vessel classes. They support the Newtonian-fluid assumption and
exhibit moderate or strong pulsatility. Their wall rigidity, and in large
veins also their local geometric uniformity, are marked as marginal. These
marginal properties do not necessarily preclude the use of the proposed model
as a local vessel-segment approximation over sufficiently uniform segments,
provided that the admissibility criterion \(S_{\mathrm{Rx}}\ge2\) is
satisfied. In such cases, the time-variant formulation may be relevant,
particularly when residual pulsatility is not negligible.

Arteries below the largest conduit arteries, including femoral, renal, and
medium-sized arteries, are also conditionally compatible with the proposed
model. These vessels generally support the Newtonian-fluid assumption and
retain moderate to strong pulsatility. Although wall rigidity and local
geometric uniformity are not always fully satisfied, they are marked as
marginal rather than as clear violations. Hence, the proposed model can be
used as a local vessel-segment approximation in arterial segments where wall
deformation and spatial variations of the vessel radius are sufficiently
small, again provided that the admissibility criterion \(S_{\mathrm{Rx}}\ge2\)
is satisfied.

The least compatible vessel classes are the largest conduit arteries,
including the aorta and pulmonary arteries. These vessels are highly relevant
from the viewpoint of pulsatile flow and support the Newtonian-fluid
assumption, but they violate the rigid-wall assumption and exhibit more
complex hemodynamics. Therefore, the proposed model should not be interpreted
as a complete transport model for these vessels. Instead, it can only be used as a local vessel-segment approximation, while
a more complete model would need to account for arterial wall compliance and
its effects on pulsatile hemodynamics, including Windkessel-type buffering
and pulse-wave propagation.

Finally, terminal arterioles, capillaries, and pulmonary capillaries are
mainly limited by the Newtonian-fluid assumption. Their rigid-wall and
constant-radius assumptions are largely compatible with the model, but their
diameters are comparable to the size of blood cells. As a result, microscopic
effects such as cell partitioning, near-wall interactions, and non-Newtonian
blood rheology may become important. These vessel classes are therefore less
compatible with the proposed model and require more
specialized microvascular transport models.

Overall, Table~\ref{tab:Vessel_Types} indicates that several vessel classes
are physiologically compatible with the proposed analytical model when it is
interpreted as a local vessel-segment model. The most suitable cases are
those in which the vessel geometry is locally simple and the wall-rigidity
and Newtonian-fluid assumptions are at least conditionally satisfied.
However, the final applicability of the \ac{1D} reduction
also depends on the specific \ac{Tx}-\ac{Rx} distance \(x_{\mathrm{Rx}}\) and
on the type of signaling molecule through its diffusion coefficient \(D\), as both
enter the admissibility criterion \(S_{\mathrm{Rx}}\ge2\). Thus, the proposed
model is expected to be reliable when both the vessel-class assumptions and
the admissibility criterion are satisfied for the considered
scenario. Weak pulsatility does not invalidate the formulation; it only reduces the expected difference between the
time-variant model and  steady-flow models. Conversely, stronger
pulsatility makes the time-variant formulation more important whenever the
remaining reduced-model assumptions are sufficiently satisfied.

\section{Conclusion}
\label{sec:conclusion}

In this paper, we developed a time-variant analytical model for dispersive
\ac{MC} channels with pulsatile flow. The model describes the local
straight-duct \ac{Rx} response through a Normal distribution with
time-variant mean and variance and extends this response to recirculating
closed-loop channels using a wrapped-Normal representation. Comparison with
\ac{3D} \acp{PBS} for synthetic and physiologically motivated velocity
waveforms showed close agreement over the considered parameter regimes.

The analysis provides several general insights into pulsatile molecular
transport. First, pulsatility has the strongest impact on the received signal
when transport is predominantly advective and occurs over time scales for
which the instantaneous flow variations are not averaged out. In contrast,
stronger diffusive spreading, longer propagation times, and faster flow
oscillations reduce the sensitivity of the received signal to the
instantaneous pulsatile waveform. Hence, a steady-flow approximation becomes
increasingly appropriate when temporal averaging dominates, whereas a
time-variant description is important when pulsatile effects remain
pronounced.

Second, the applicability analysis shows that the accuracy of the reduced
\ac{1D} model can be related to the nondimensional indicator
\(S_{\mathrm{Rx}}\). The \ac{PBS}-based parameter sweep supports
\(S_{\mathrm{Rx}}\ge2\) as a practical criterion for the applicability
of the proposed model. However, for blood vessels, this criterion alone is not sufficient
to establish physiological applicability. Our vessel-class analysis shows
that the model is most appropriate for locally uniform vessel segments for
which the rigid-wall and Newtonian-fluid assumptions are sufficiently
satisfied, whereas large compliant arteries and microvascular vessels may
require additional physiological modeling.

Finally, the proposed first-order peak-time approximation provides a
computationally efficient way to estimate pulsatility-induced shifts of the
first-arrival peak without having to numerically determine the maximum of the full \ac{Rx} response.
Together, these results provide guidance on when pulsatile flow needs to be
modeled explicitly and when the proposed reduced analytical description can
be expected to be accurate.

Future work includes extending the framework to branched vascular networks
and spatially varying vessel properties, incorporating effects such as vessel
wall compliance, and validating the model experimentally under
physiologically representative flow conditions.

\section*{Acknowledgments}
This work was funded in part by the German Federal Ministry of Research, Technology and Space (BMFTR) through Project Internet of Bio-Nano-Things (IoBNT) -- grant number 16KIS1987, in part by the German Research Foundation (Deutsche Forschungsgemeinschaft, DFG) under GRK 2950 -- ProjectID 509922606, grant number SCHA 2350/2-1 and SCHA 2350/4-1, and in part by the European Union’s Horizon Europe -- HORIZON-EIC-2024-PATHFINDEROPEN-01 under grant agreement Project N. 101185661.

\appendices

\section{Proof of Lemma~\ref{lem:straight_duct}}
\label{app:straight_duct_proof}

\begin{proof}
To obtain solution \eqref{eq:gaussian_infinite} for $p(x,t)$, we transform
\eqref{eq:adv_diff} into a diffusion equation with constant coefficients. We
first introduce the moving spatial coordinate
\begin{equation}
\xi = x - \mu(t), \qquad \mu(t) = \int_0^t u(s)\,\mathrm{d}s,
\end{equation}
where $\mu(t)$ represents the cumulative axial displacement of particles by
the time-variant flow. Next, we define the transformed concentration
\begin{equation}
\tilde p(\xi,t) := p(\xi+\mu(t),t),
\end{equation}
and since $\frac{\mathrm{d}\mu(t)}{\mathrm{d}t}=u(t)$, the chain rule yields
\begin{align}
\partial_x p(x,t) &= \partial_\xi \tilde p(\xi,t),\\
\partial_{xx} p(x,t) &= \partial_{\xi\xi} \tilde p(\xi,t),\\
\partial_t p(x,t) &= \partial_t \tilde p(\xi,t) - u(t)\partial_\xi \tilde p(\xi,t).
\end{align}
Substituting these expressions into \eqref{eq:adv_diff}, the advection terms
cancel and we obtain a diffusion equation with time-variant diffusion
coefficient
\begin{equation}
\partial_t \tilde p(\xi,t)=D_{\mathrm{1D}}(t)\partial_{\xi\xi}\tilde p(\xi,t).
\label{eq:transf}
\end{equation}
Next, we introduce a transformed time variable
\begin{equation}
\tau(t)=\int_0^t D_{\mathrm{1D}}(s)\,\mathrm{d}s,
\end{equation}
which normalizes the time axis such that the diffusion coefficient becomes
constant, i.e., $\mathrm{d}\tau/\mathrm{d}t=D_{\mathrm{1D}}(t)$. Let
$h(\xi,\tau(t)) = \tilde p(\xi,t)$. Differentiating with respect to $t$ yields
\begin{equation}
\partial_t \tilde p(\xi,t)=D_{\mathrm{1D}}(t)\partial_\tau h(\xi,\tau).
\end{equation}
Substituting into \eqref{eq:transf} leads to a diffusion equation with
constant coefficients
\begin{equation}
\partial_\tau h(\xi,\tau)=\partial_{\xi\xi}h(\xi,\tau).
\label{eq:trans-diff}
\end{equation}
For an impulsive release of particles at $t=0$ and $x=0$, the transformed
initial condition is $h(\xi,0)=\delta(\xi)$, where $\delta(\cdot)$ denotes the Dirac delta function. Solving \eqref{eq:trans-diff}
in an infinitely long straight duct yields, after returning to the original
variables~\cite{Jamali2019ChannelModeling}, the solution
$p_\mathrm{sd}(x,t)$ for \eqref{eq:adv_diff} in \eqref{eq:gaussian_infinite}.
\end{proof}

\section{Proof of Theorem~\ref{thm:closed_loop}}
\label{app:closed_loop_proof}

\begin{proof}
To obtain $p(x,t)$ for closed-loop systems \eqref{eq:cir}, we first restrict
the axial coordinate $x$ to the interval $[0,L]$ and impose periodic boundary
conditions, i.e.,
\begin{equation}
p(0,t)=p(L,t), \qquad \partial_x p(0,t)=\partial_x p(L,t).
\label{eq:periodic_bc}
\end{equation}
Then, the corresponding closed-loop solution \eqref{eq:cir} can be obtained
in two ways. First, one may solve \eqref{eq:adv_diff} directly on $[0,L]$
following the approach described for the straight-duct case in
Section~\ref{subsec:str_duct}, subject to the periodic boundary conditions in
\eqref{eq:periodic_bc}. Second, one may start directly from
\eqref{eq:gaussian_infinite} and construct a periodic representation by
summing over all integer shifts of length $L$, which effectively wraps the
straight-duct solution \eqref{eq:gaussian_infinite} around a circle of
circumference $L$, in line with the approach in~\cite{vakilipoor2025cam}.
\end{proof}

\section{Derivation of the First-Peak Approximation}
\label{app:first_peak_derivation}

For the local-concentration approximation in
\eqref{eq:pointwise_direct_concentration}, we define
\begin{equation}
a(t)=x_{\mathrm{Rx}}-\mu(t),
\qquad
s(t)=\sigma^2(t).
\end{equation}
The logarithm of the first-arrival local concentration is, up to an additive
constant,
\begin{equation}
\log p_0(x_{\mathrm{Rx}},t)
=
-\frac{1}{2}\log s(t)
-
\frac{a^2(t)}{2s(t)}.
\end{equation}
Using \(a'(t)=-u(t)\), where the prime denotes differentiation with respect
to \(t\), the peak condition
\(\frac{\mathrm{d}}{\mathrm{d}t}\log p_0(x_{\mathrm{Rx}},t)=0\) can be written as
\begin{equation}
F(t)
=
a(t)u(t)s(t)
+
\frac{1}{2}
\left[
a^2(t)-s(t)
\right]
s'(t)
=0.
\label{eq:appendix_peak_condition}
\end{equation}
For the steady-flow case,
\begin{equation}
u(t)=\bar u,
\qquad
a_{\mathrm{c}}(t)=x_{\mathrm{Rx}}-\bar u t,
\qquad
s_{\mathrm{c}}(t)=2D_{\mathrm{eff}}^{\mathrm{const}}t.
\end{equation}
Substituting these expressions into \eqref{eq:appendix_peak_condition} yields
the steady-flow peak condition. Solving it gives the peak time \(t_0\) in
\eqref{eq:t0_first_peak}. We denote the corresponding steady-flow peak
function by \(F_0(t)\). Differentiating \(F_0(t)\) and evaluating the result
at \(t_0\) yields \eqref{eq:F0prime_peak}.

The derivative \(F_0'(t_0)\) determines the local sensitivity of the
steady-flow peak condition and will enter the first-order correction of the
peak time. We now introduce the pulsatile modulation as a perturbation of
the steady-flow system.

\begin{equation}
u(t)=\bar u[1+q(t)],
\quad
q(t)=
\sum_{n=1}^{N}
M_n\cos(\omega_n t+\phi_n),
\quad
\omega_n=n\omega.
\end{equation}
The accumulated modulation is
\begin{equation}
Q(t)
=
\int_0^t q(s)\,\mathrm{d}s
=
\sum_{n=1}^{N}
\frac{M_n}{\omega_n}
\left[
\sin(\omega_n t+\phi_n)-\sin(\phi_n)
\right].
\end{equation}
Retaining only first-order terms in the modulation amplitudes \(M_n\), the
mean displacement and variance become
\begin{align}
\mu(t)
&=
\bar u t+\bar u Q(t)
+
\mathcal{O}(\|\mathbf{M}\|^2),
\\
\sigma^2(t)
&=
2D_{\mathrm{eff}}^{\mathrm{const}}t
+
4\kappa \bar u^2 Q(t)
+
\mathcal{O}(\|\mathbf{M}\|^2),
\end{align}
where \(\kappa=R^2/(48D)\), \(\mathbf{M}=[M_1,\ldots,M_N]^{\mathsf T}\), and
\(\mathcal{O}(\|\mathbf{M}\|^2)\) collects terms of second and higher order
in the modulation amplitudes. Hence,
\begin{align}
a(t)
&=
a_{\mathrm{c}}(t)-\bar u Q(t)
+
\mathcal{O}(\|\mathbf{M}\|^2),
\\
s(t)
&=
s_{\mathrm{c}}(t)+4\kappa \bar u^2 Q(t)
+
\mathcal{O}(\|\mathbf{M}\|^2),
\\
s'(t)
&=
2D_{\mathrm{eff}}^{\mathrm{const}}
+
4\kappa \bar u^2 q(t)
+
\mathcal{O}(\|\mathbf{M}\|^2).
\end{align}

Substituting these first-order expansions into
\eqref{eq:appendix_peak_condition} gives
\begin{equation}
F(t)
=
F_0(t)+F_1(t)+\mathcal{O}(\|\mathbf{M}\|^2),
\end{equation}
where \(F_1(t)\) contains the terms that are linear in the pulsatile
modulation. Since \(F_0(t_0)=0\), \(F_1(t_0)\) represents the first-order
perturbation of the peak condition caused by the pulsatile waveform.

Evaluating \(F_1(t)\) at \(t=t_0\) and using the definitions in
\eqref{eq:peak_auxiliary_defs} and \eqref{eq:q0_Q0_defs} yields
\eqref{eq:F1_peak}.

Finally, let the pulsatile peak time be written as
\begin{equation}
t_{\mathrm{peak}}=t_0+\Delta t,
\end{equation}
where \(\Delta t=\mathcal{O}(\|\mathbf{M}\|)\). A first-order Taylor expansion of
\(F(t_{\mathrm{peak}})=0\) around \(t_0\) gives
\begin{equation}
0
=
F_0(t_0)
+
F_0'(t_0)\Delta t
+
F_1(t_0)
+
\mathcal{O}(\|\mathbf{M}\|^2).
\end{equation}
Since \(F_0(t_0)=0\), we obtain
\begin{equation}
\Delta t
=
-
\frac{F_1(t_0)}{F_0'(t_0)}.
\end{equation}
Here, \(F_0'(t_0)\) quantifies the local sensitivity of the steady-flow peak
condition around \(t_0\). Therefore, the ratio
\(-F_1(t_0)/F_0'(t_0)\) converts the pulsatile perturbation of the peak
condition into a first-order shift of the peak time, which directly yields
\eqref{eq:tpeak_first_order}.

\section*{Acknowledgments}
This work was funded in part by the German Federal Ministry of Research, Technology and Space (BMFTR) through Project Internet of Bio-Nano-Things (IoBNT) -- grant number 16KIS1987, in part by the German Research Foundation (Deutsche Forschungsgemeinschaft, DFG) under GRK 2950 -- ProjectID 509922606, grant number SCHA 2350/2-1 and SCHA 2350/4-1, and in part by the European Union’s Horizon Europe -- HORIZON-EIC-2024-PATHFINDEROPEN-01 under grant agreement Project N. 101185661.

\ifCLASSOPTIONcaptionsoff
  \newpage
\fi

\bibliographystyle{IEEEtran}
\bibliography{references}

\end{document}

%% file: acronyms.tex
\acrodef{MC}{molecular communication}
\acrodef{CIR}{channel impulse response}
\acrodef{PBS}{particle-based simulation}
\acrodef{1D}{one-dimensional}
\acrodef {3D}{three-dimensional}
\acrodef{PDF}{probability density}
\acrodef{CVS}{cardiovascular system}
\acrodef{Tx}{transmitter}
\acrodef{Rx}{receiver}